\newtheorem{definition}{Definition}
\newtheorem{lemma}{Lemma}
\newtheorem{theorem}{Theorem}
\newtheorem{example}{Example}
\newtheorem{assumption}{Assumption}
\newcommand{\QED}{\hfill\ensuremath{\square}}
\newcommand{\ppf}{\ensuremath{\mathcal{PPF}}}
\newcommand{\sn}{\ensuremath{\mathcal{SN}}}
\newcommand{\ppl}{\ensuremath{\mathcal{PPL}}}
\newcommand{\kbl}{\ensuremath{\mathcal{KBL}}}
\newcommand{\sat}{\ensuremath{\vDash}}
\newcommand{\der}{\ensuremath{\vdash}}
\newcommand{\poval}{\ensuremath{\pi}}
\newcommand{\propSet}{\ensuremath{\mathcal{P}}}
\newcommand{\poSet}{\ensuremath{\Pi}}
\newcommand{\conSet}{\ensuremath{\mathcal{C}}}
\newcommand{\struct}{\ensuremath{\mathcal{A}}}
\newcommand{\kb}{\ensuremath{\mathit{KB}}}
\newcommand{\wkbl}{\ensuremath{\mathcal{F_{KBL}}}}
\newcommand{\actSet}{\ensuremath{\Sigma}}
\newcommand{\etal}{\mbox{\emph{et al.\ }}}
\newcommand{\nat}{\ensuremath{\mathbb{N}}}
\newcommand{\mlt}{\ensuremath{\mathcal{M}^{lt}}}
\newcommand{\lang}{\ensuremath{\mathcal{L}}}
\newcommand{\aguni}{\ensuremath{\mathcal{AU}}}
\newcommand{\voca}{\ensuremath{\mathcal{T}}}
\newcommand{\dom}{\ensuremath{\mathit{dom(\mathcal{A})}}}
\newcommand{\kripkeval}{\ensuremath{\pi}}
\newcommand{\accrelation}{\ensuremath{\mathcal{K}}}
\newcommand{\snv}{\ensuremath{\mathit{SN}}}
\newcommand{\alice}{\ensuremath{\mathit{Alice}}}
\newcommand{\bob}{\ensuremath{\mathit{Bob}}}
\newcommand{\charlie}{\ensuremath{\mathit{Charlie}}}
\newcommand{\audience}{\ensuremath{\mathit{Audience}}}
\newcommand{\owner}{\ensuremath{\mathit{owner}}}
\newcommand{\loc}{\ensuremath{\mathit{loc}}}
\newcommand{\location}{\ensuremath{\mathit{loc}}}
\newcommand{\post}{\ensuremath{\mathit{post}}}
\newcommand{\Blocked}{\ensuremath{\mathit{Blocked}}}
\newcommand{\blocked}{\ensuremath{\mathit{blocked}}}
\newcommand{\friendRequest}{\ensuremath{\mathit{friendRequest}}}
\newcommand{\friend}{\ensuremath{\mathit{friend}}}
\newcommand{\friends}{\ensuremath{\mathit{friends}}}
\newcommand{\Friend}{\ensuremath{\mathit{Friend}}}
\newcommand{\age}{\ensuremath{\mathit{age}}}
\newcommand{\pub}{\ensuremath{\mathit{pub}}}
\newcommand{\library}{\ensuremath{\mathit{library}}}
\newcommand{\outerk}{\ensuremath{\mathit{outerK}}}
\newcommand{\ag}{\ensuremath{\mathit{Ag}}}
\newcommand{\phisnv}{\ensuremath{\phi_{\snv}}}
\newcommand{\Phisnv}{\ensuremath{\Phi_{\snv}}}
\newcommand{\mPhisnv}{\ensuremath{\Phi^{m}_{\snv}}}
\newcommand{\phikbi}{\ensuremath{\phi_{\kbi}}}
\newcommand{\sub}{\ensuremath{\mathit{Sub}}}
\newcommand{\subplus}{\ensuremath{\mathit{Sub}^{+}}}
\newcommand{\con}{\ensuremath{\mathit{Con}}}
\newcommand{\kt}{\ensuremath{\mathcal{KT}}}
\newcommand{\ktm}{\ensuremath{\mathcal{KT}^{m}}}
\newcommand{\kbi}{\ensuremath{\kb_i}}
\newcommand{\co}{\ensuremath{\mathit{co}}}
\newcommand{\kaxioma}{\textbf{K}}
\newcommand{\kdaxioma}{\textbf{KD}}
\newcommand{\kdfouraxioma}{\textbf{KD4}}
\newcommand{\cc}{\textbf{cc}}
\let\temp\phi
\let\phi\varphi
\let\varphi\temp
\title{Model Checking Social Network Models}
\author{
  Ra\'ul Pardo \qquad Gerardo Schneider
  \institute{
    Department of Computer Science and Engineering,\\
    Chalmers $\mid$ University of Gothenburg, Sweden.
  }
  \email{pardo@chalmers.se \quad \qquad gerardo@chalmers.se}
}
\begin{document}

\maketitle

\begin{abstract}
  A {\em social network service} is a platform to build social
  relations among people sharing similar interests and activities.
  The underlying structure of a social networks service is the
  {\em social graph}, where nodes
  represent users and the arcs represent the users' social
  links and other kind of connections. One important concern in social networks
  is {\em privacy}: what others are (not) allowed to {\em know} about
  us.  The ``logic of knowledge'' (\emph{epistemic logic}) is thus a
  good
formalism to define, and reason about, privacy
  policies.
In this paper we consider the problem of verifying knowledge properties
over {\em social network models} (SNMs), that is social graphs enriched with {\em knowledge bases} containing the information that the users know.
More concretely, our contributions are:
i) We prove that the model checking problem for
epistemic properties
over SNMs is decidable;
ii) We prove that a number of properties of knowledge that are sound w.r.t.~Kripke models are also sound w.r.t.~SNMs;
iii) We give a satisfaction-preserving encoding of SNMs into {\em canonical} Kripke models, and we also characterise which Kripke models may be translated into SNMs;
iv) We show that, for SNMs, the model checking problem is cheaper than the one based on standard Kripke models.
Finally, we have developed a proof-of-concept implementation of the model-checking algorithm for SNMs.
\end{abstract}


\section{Introduction}
\label{sec:introduction}

{\em Social networks services} (or simply {\em social networks}) are one of the most popular services on the Internet nowadays.
One of the main concerns in social networks is that of privacy: most users are not in full control over what they share, and it is not uncommon that private and personal data is leaked to an unintended audience \cite{YKBA11afps+}. These concerns arise because users cannot determine (in a precise manner) who {\em knows} their personal information.
One solution is to provide users with more fine grained control over who knows their information. Epistemic logic or ``the logic of knowledge''~\cite{FHM+95rk} offers great precision and granularity for modelling and reasoning about the knowledge of the (users or {\em agents}) in a system.


In \cite{PS14fpp} we introduced \ppf, a formalism based on epistemic logic to specify privacy policies in social networks, and to enable a formal assessment on whether these policies are preserved.
\ppf\ consists of:
i) A generic model for social networks (SNMs);
ii) A knowledge-based logic (\kbl) to reason about the social network and privacy policies;
iii) A formal language (\ppl) to describe privacy policies (based on \kbl).
In \cite{PardoLic}, \ppf\ was further extended by
providing agents with a deductive engine to perform knowledge inferences, and including an operational semantics to model the dynamics of social networks.


\ppf~has been specifically designed for privacy policies for real social networks, and that is why the language \ppl~and the underlying logic \kbl~are interpreted over SNMs and not over Kripke models (\emph{possible-worlds} semantics), which is the ``standard'' way to give semantics to epistemic logic.
In Kripke models the uncertainty of the agents is modelled using an {\em accessibility relation}. This relation connects all the worlds in the model that an agent considers possible. If a formula is true in all of them, then the agent knows it.
This does not correspond to the way users in real world social networks acquire and reason about information.
Typically, when a user joins a social network, she knows none or a few facts about it. The system might suggest some friends that are retrieved from the user's phone contacts. As the user makes new friends and they share information, her knowledge starts to grow, and later from this set of accumulated knowledge users may derive new facts.

There are two main advantages in \ppf's design (as opposed to standard Kripke models):

\begin{compactenum}
\item \textbf{It preserves the original structure of real social networks}. The models in \ppf~(SNMs) consist of the {\em social graph}~\cite{K14cn} and a knowledge base per user. The topology of the social graph provides information regarding the relationships between users (e.g., friends, colleagues,...). The knowledge base gives semantics to the modality $K_i \phi$ (user $i$ knows $\phi$).
Knowledge bases are not a new invention, they are just an instance of the {\em syntactic} approach to modelling knowledge~\cite{HC13bsal}.
This structure is also important from the enforcement point of view since it facilitates the integration of the framework with the target social network.

\item \textbf{Checking  whether a user knows something must be as efficient as possible}. The privacy policies that users can specify in \ppf~talk about knowledge, e.g., ``Only my friends can know my location'' or ``Only my family can know that I am going to my father's birthday party''. Therefore, the enforcement of \ppf~privacy policies mainly depends on how efficiently these checks are performed. Social networks have millions of users, who disclose tons of information per second. As a consequence, a slow enforcement mechanism would not work in practice. By splitting the users' knowledge in different knowledge bases, the complexity of checking whether a user knows a piece of information can be significantly reduced. In Section~\ref{sec:complexity} we study the improvement in complexity of having separated knowledge bases as opposed to standard Kripke semantics.
\end{compactenum}


The properties of knowledge related to human reasoning, present in Kripke models, have been studied for decades and they are well-understood \cite{FHM+95rk}. On the other hand, the properties of knowledge in SNMs have not been throughly studied. Therefore, several questions need to be answered:
\begin{inparaenum}[i)]
\item What is the relation between SNMs and Kripke models?
\item Does this slightly different representation of knowledge preserve the same properties?
\item Is it possible to determine whether an epistemic formula written in \kbl~is satisfied on a given SNM?\footnote{Answering this question will also solve the model checking problem for privacy policies written in \ppl, as checking conformance of \ppl~is reduced to checking satisfaction of a \kbl~formula.}
\end{inparaenum}
In this paper we study in depth the answer to these questions
providing evidence that
\ppf~not only offers advantages from the practical point of view, but also models knowledge as traditionally understood and accepted in the epistemic logic literature.


More concretely, our contributions are:
\begin{inparaenum}[i)]
\item A proof that model checking \kbl~formulae over SNMs is decidable, the algorithm being an implementation of the satisfaction relation for \kbl~(Section \ref{sec:mc});
\item A logical characterisation of a number of properties of knowledge for SNMs including {\it common} and {\it distributed knowledge} (Section \ref{sec:axioms}).
\item A translation from SNMs into {\em canonical} Kripke models, together with a proof that satisfaction is preserved (Section~\ref{sec:canonical});
we also show that it is always possible to reconstruct the original SNM from the canonical Kripke model, by considering the state associated with the characteristic formulae (Section~\ref{sec:kripketosnm});
\item A formal comparison of the complexity of the model checking problem for SNMs and for Kripke models where we show that the former is more efficient (Section~\ref{sec:complexity}).
\end{inparaenum}
Additionally, we provide a proof-of-concept implementation of the model-checking algorithm.\footnote{\url{https://github.com/raulpardo/kbl-model-checker}}
The extended version of this paper includes the proofs of all Theorems and Lemmas~\cite{appendix}.


\section{Preliminaries}
\label{sec:preliminaries}

Here we briefly recall First-Order Epistemic Logic \cite{FHM+95rk}, social network models and the logic \kbl~\cite{PardoLic}.

\subsection{First-Order Epistemic Logic}
\label{sec:epistemicfol}
We start with a set \voca, consisting of \emph{relation symbols}
($p$), \emph{function symbols} ($f$) and \emph{constants symbols}
$(c)$. Hereafter we will refer to \voca~as the \emph{vocabulary}. Each
relation and function symbol has an implicit {\em arity}  which
corresponds to the number of arguments it takes. Function and
relation symbols are interpreted over elements of a domain.
We assume an infinite supply of variables, which we write as $x, y$ and so on.
We can form terms using constants, variables, and function symbols.
Formally, a {\em term} $t$ is recursively
defined as follows: $t \mbox{ ::= } c \; | \; x \; | \; f(\vv{t})$,
where $\vv{t}$ represents a list of terms $t_1, \ldots, t_k$. An
\emph{atomic formula} is of the form $p(\vv{t})$ where $p$ is a
relation symbol. Let $\ag$ be a set of {\em agents}, $i \in \ag$ and
$G \subseteq \ag$, the syntax of {\em First-Order Epistemic Logic} (FOEL), denoted as \lang, is
recursively defined as follows \cite{FHM+95rk}:
\[ \phi \mbox{::=} p(\vv{t}) \; | \; \phi \wedge \phi \; | \; \neg \phi \; | \; \forall x. \phi \; | \; K_i \phi \; \]
The remaining epistemic modalities are defined as $S_G \phi \triangleq \bigvee_{i \in G}  K_i \phi$ and $E_G \phi \triangleq \bigwedge_{i \in G} \phi$.
The intuitive meaning of the modalities is the following:
\begin{inparaenum}
\item[$K_{i} \phi$,] agent $i$ knows $\phi$;
\item[$E_{G} \phi$,] everyone in the group $G$ knows $\phi$;
\item[$S_{G} \phi$,] someone in the group $G$ knows $\phi$.
\end{inparaenum}
The semantics of FOEL formulae is given using \emph{relational Kripke models}.
In what follows we sometimes omit relational and write Kripke models.

\begin{definition}[\cite{FHM+95rk}]
  A \emph{relational Kripke Model} is a tuple of the form $\langle S,
  \kripkeval, \{\accrelation_i\}_{i \in \ag} \rangle$, where:
  \begin{compactitem}

  \item $S$ is a non-empty set of {\em states} (or {\em worlds}).

  \item $\kripkeval : S \rightarrow \struct$ is a function that associates to each world a {\em relation structure} for a fixed vocabulary \voca. As usual,~\struct~consists of a domain \dom, an assignment of a k-ary relation $P^{\struct} \subseteq \dom^{k}$ for each relation symbol, an assignment of a k-ary function $f^{\struct}: \dom^{k} \rightarrow \dom$ for each function symbol and an assignment of a member $c^{\struct}$ of the domain for each constant symbol. 

  \item $\{\accrelation_i\}_{i \in \ag}$ where $\accrelation_i \subseteq S \times S$ is an {\em accessibility relation} between states.
  \end{compactitem}
\end{definition}

\begin{example}
  Let us consider a Kripke structure consisting of agents $a$ and $b$, states $s_0$, $s_1$ and $s_2$, a predicate $p$ with arity 1 and relations $\mathcal{K}_a = \{(s_0, s_1), (s_1, s_0)\}$ and $\mathcal{K}_b = \{(s_1, s_2), (s_2, s_1)\}$. We assume here that all relational structures $\pi(s_n)$ have a common domain $\dom = \{a,b\}$, i.e., $\ag$. Moreover, $a \in P^{\pi(s_0)}$ and $a \in P^{\pi(s_1)}$.  Fig.~\ref{fig:ksexample} shows a graphical representation of the described model.\QED
\end{example}


\begin{figure}[!t]
  \centering
  \footnotesize
      \begin{tikzpicture}[>=stealth',every text node part/.style={align=center},on grid,auto, semithick]
	\node (s0) [circle, draw, minimum width=10pt, minimum height=8ex, label=below:$s_0$]%
	      {$p(a)$};
        \node (s1) [circle,draw,minimum width=10pt, minimum height=8ex, label=below:$s_1$]%
		    [right = 3cm of s0]%
		    {$p(a)$};
	\node (s2) [circle,draw,minimum width=10pt, minimum height=8ex,label=below:$s_2$]%
		        [right = 3cm of s1]%
		        {};
	\path[-] [] (s0) edge node {$a$} (s1);
	\path[-] [] (s1) edge node {$b$} (s2);
      \end{tikzpicture}
  \caption{\label{fig:ksexample} Relational Kripke structure}
\end{figure}
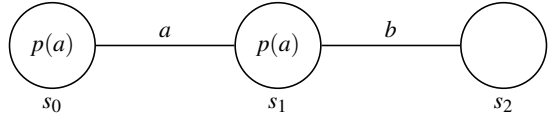

Usually free variables and terms are interpreted using a {\em valuation} function, which is parametrised with a relational structure depending of the state of the Kripke model in which the
formula is evaluated.
For simplicity, in this paper we will assume that formulae in \lang~do not contain free variables (i.e., all variables are quantified) and the interpretation of functions and constants is the same independently of the  state where they are evaluated.
Thus, we assume that terms are implicitly interpreted and we do not include the valuation function as a parameter in the satisfaction relation below.


\begin{definition}[\cite{FHM+95rk}]
  Given a non-empty set of agents $\ag$, a relational Kripke model $M$, a state $s \in M$, agents $i,j,u \in \ag$ and a finite set of agents $G \subseteq \ag$ 
  , we define what it means for $\phi \in \lang$ 
  to be {\em satisfied} by $(M, s)$, written $(M, s) \sat \phi$, as shown in Table \ref{tab:kripkesat}.
\end{definition}

\begin{table}[!t]
  \centering
  {
  \begin{tabular}{l c l}
    $(M, s) \sat p(t_1,\ldots,t_k)$ & \mbox{iff} & $(t_1,\ldots,t_k) \in P^{\pi(s)}$\\
    $(M, s) \sat \neg \phi $ & \mbox{iff} & $(M, s) \not \sat \phi$ \\
    $(M, s) \sat \phi_1 \wedge \phi_2 $ & \mbox{iff} & $(M, s) \sat \phi_1 \mbox{ and } (M, s) \sat \phi_2$\\
    $(M, s) \sat \forall x. \phi$ & \mbox{iff} & \mbox{ for all  } $v \in \mathit{dom}(\pi(s))$, $(M, s) \sat \phi[v/x]$ \\
    $(M, s) \sat K_i \phi$ & \mbox{iff} & $(M, t) \sat \phi~\mbox{for all $t$ such that}~(s,t) \in \mathcal{K}_i$\\
  \end{tabular}}
  \caption{Satisfaction relation over Kripke models\label{tab:kripkesat}}
\end{table}

We say that a formula $\phi$ is {\em valid in a Kripke model} $M$, and we write $M \sat \phi$, if $\forall s \in M \; (M, s) \sat \phi$. Moreover, we say that $\phi$ is {\em valid}, denoted as $\sat \phi$, if for all Kripke models $M$ it holds $M \sat \phi$.

\begin{example}
  Let $M$ be the model presented in Fig.~\ref{fig:ksexample}.
  It holds that $(M, s_0) \sat K_a p(a)$, since $p(a)$ holds in $s_0$ and in all the states accessible for $a$ from $s_0$ (only $s_1$).
  It also holds that $(M, s_1) \sat \neg K_b p(a)$, since in one of the states that $b$ considers possible $p(a)$ is not true.
  In particular, $(M, s_2) \sat \neg p(a)$.
\end{example}

\subsection{\kbl~and Social Network Models}
\label{sec:ppf}

\kbl~is a {\em knowledge-based logic} for social networks.
It contains all the knowledge modalities presented in \lang, and additionally, it includes two special types of predicates.
The connection and action predicates.
Connection predicates represent the ``social'' connections between users.
For instance, friends, colleagues, family, co-workers, and so forth.
Action predicates model the permitted actions a user may execute.
For example, Alice can send a friend request to Bob or Alice can join events created by Bob.
Note that action predicates are not deontic modalities.
Hereafter we use $\conSet$ and $\actSet$ to denote sets of indexes for connections and permissions, respectively.
As before the set \ag~represents a set of agents in the system.

\begin{definition}[]\label{kbl:syntax}
  Given $i,j \in \ag$, a set of predicate symbols $\propSet$ such that $a_n(i,j), c_m(i,j), p(\vv{t}) \in \propSet$ where $m \in \conSet$ and $n \in \actSet$, and $G\subseteq Ag$, the syntax of the {\em knowledge-based logic} \kbl~is inductively defined as:
  \begin{eqnarray*}
    \begin{tabular}{l l l}
      $\phi$ & \mbox{::=} & $c_m(i,j)\; | \; a_n(i,j) \; | \; p(\vv{t}) \; | \; \phi \wedge \phi \; | \; \neg \phi \; | \; \forall x. \phi \; | \; K_i \phi \; $\\
    \end{tabular}
  \end{eqnarray*}
  As before, the remaining epistemic modalities are defined as $S_G \phi \triangleq \bigvee_{i \in G}K_i \phi$ and $E_G \phi \triangleq \bigwedge_{i \in G} \phi$.
\end{definition}

Terms and atomic formulae are defined as for \lang.
\wkbl~denotes the set of {\em well-formed formulae} of \kbl~(category $\phi$ of Def.~\ref{kbl:syntax}).

Social networks are usually modelled as graphs where nodes represent the users (or agents), and edges represent different relationships among agents or any other social network specific information \cite{K14cn}.  These graphs are known as \emph{social graphs}.
Here we enrich social graphs with information about the agents knowledge, permissions, connections and privacy policies as defined below.

\begin{definition}[Social Network Model]\label{def:sn}
  Given
  a set of \kbl~formulae $\mathcal{F}$,
  a set of privacy policies $\poSet$,
  and a finite set of agents $Ag \subseteq \aguni$ from a universe $\aguni$,
  a {\em social network model (SNM)} is a social graph of the form $\langle \ag, \struct, \kb, \poval \rangle$, where

  \begin{compactitem}
  \item $\ag$ is a nonempty finite set of {\em nodes} representing the agents of the social network.

  \item $\struct$ is a first-order relational structure for the fixed vocabulary of the {\em SNM}, which as before, consists of a finite domain \dom\footnote{For the sake of clarity in definitions and proofs and w.l.o.g. we have only considered a single finite domain in the formal definition. However, in the rest of the paper we will assume that we have a finite set of finite domains. For instance, we can have \dom~consisting of the domain of agents, timestamps, indexes for pictures, etc. All the results also hold in SNMs consisting of multiple domains as we consider a finite number of finite domains.}, an assignment of a k-ary relation $P^\struct \subseteq \dom^\struct$ for each predicate symbol, an assignment of a k-ary $f^\struct: \dom^k \rightarrow \dom$ for each function symbol and assignment of a member $c^\struct$ of the domain for each constant symbol.

  \item $\kb: \ag \rightarrow 2^{\mathcal{F}}$ is a function that returns a finite set of accumulated knowledge for each agent, stored in what we call the {\em knowledge base} of the agent. We write $\kb_i$ to denote $\kb(i)$. 

  \item $\poval: \ag \rightarrow 2^{\poSet}$ is a function that returns a finite set of privacy policies for each agent. We write $\poval_i$ to denote $\poval(i)$.
  \end{compactitem}
\end{definition}

\begin{figure}[t]
  \centering
  \scalebox{0.75}{
      \begin{tikzpicture}[>=stealth',every text node part/.style={align=center},on grid,auto, minimum height=6ex,semithick]
	\node (alice) [rectangle, rounded corners, draw, minimum width=200pt, label=above:Alice]%
	      {$\post(\bob,\pub,1)$\\
                $\forall t.(\post(\bob,\pub,t) \implies \loc(\bob,\pub,t))$};
        \node (bob) [rectangle, rounded corners, draw,minimum width=200pt,label=left:Bob]%
		    [right = 10cm of alice, yshift=-1.07cm]%
                    {~};
	\node (charlie) [rectangle, rounded corners, draw,minimum width=200pt,label=below:Charlie]%
		        [below = 2cm of alice]%
		        {$\post(\bob,\library, 2)$};
	\draw[<->] [dashed] (alice) -| node {$\Friend$} (bob);
	\draw[->] [dashed] (bob) |- node {$\Blocked$} (charlie);
	\draw[->] [dotted] (charlie) -- (-5,-2) |- node {$\friendRequest$} (alice);
      \end{tikzpicture}}
  \caption{\label{fig:snmexample} Example of Social Network Model}
\end{figure}
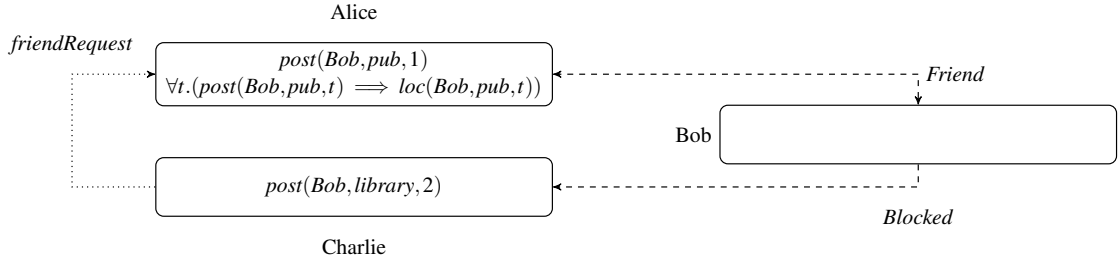

The shape of the relational structure $\struct$ depends on the concrete the social network.
Connections and permission actions between agents, i.e., edges of the social graph, are represented
as families of binary relations, $ \{C_i\}_{i \in  \conSet} \subseteq 2^{\ag \times \ag} $ and $ \{A_i\}_{i \in \actSet} \subseteq 2^{\ag \times \ag} $ over the domain of agents. Sometimes, we write an atomic formula, e.g.~$\friends(a,b)$ to denote that the elements $a, b \in \ag$ belong to a binary relation, $\friends$, defined over pairs of agents as expected.
\sn~denotes the universe of all possible SNMs.

The knowledge base $\kb_i$ of each agent $i$ contains the explicit knowledge that the agent has.
Besides this explicit knowledge, agents also know anything that can be derived from formulae in their knowledge bases (using the \kdfouraxioma~axiomatisation of epistemic logic~\cite{FHM+95rk}).

\begin{definition}
  A {\em derivation} of a formula $\phi \in \wkbl$, is a finite sequence of formulae $\phi_1, \ldots,$ $\phi_n = \phi$ where each $\phi_i$, for $1 \leq i \leq n$, is either an instance of the axioms or the conclusion of one of the derivation rules of the \kdfouraxioma~axiomatisation which premises have already been derived, i.e., it appears as $\phi_j$ with $j < i$.

  Given a set of formulae $\Gamma \in 2^\wkbl$, we write $\Gamma \der \phi$ to denote that $\phi$ can be derived from $\Gamma$.
  \label{def:derivation}
\end{definition}

Additionally, we impose two assumptions in users' knowledge bases:
\begin{compactenum}[i)]
  \item $\phi$ and $\neg \phi$ cannot be derivable in the same $\kbi$. It prevents users from having inconsistent knowledge.
  \item If $\phi$ is in $i$'s knowledge base, $K_i \phi$ is also there. In this way we make users aware of their knowledge.
\end{compactenum}
These assumptions are formalised as the following properties:
\begin{definition}[Knowledge Consistency]
  For all $i \in \ag$ and formulae $\phi \in \wkbl$,
  $\mbox{if }\kbi \der \phi \mbox{ then } \kbi \not \der \neg \phi.$
  \label{def:kb-consistent}
\end{definition}
\vspace{-5mm}
Enforcing knowledge consistency is straightforward.
Before adding any formula $\phi$ to $\kbi$ we check that $\kbi \cup \{\phi\} \not \der \neg \phi$.
\begin{definition}[Self-Awareness]
  For all $i \in \ag$ and formulae $\phi \in \wkbl$,
  $\mbox{if } \kbi \der \phi \mbox{ then } \kbi \der K_i \phi.$
  \label{def:self-awareness}
\end{definition}
\noindent \textbf{Remark 1.}
Self-awareness is not equivalent to the necessitation rule in \kdfouraxioma.
Necessitation states that if a $\phi$ is provable from no assumptions then $K_i \phi$ is provable from no assumptions as well~\cite{FHM+95rk}.
That is,
$\inference{\sat \phi}{\sat K_i\phi}$.
It requires $\phi$ to be a tautology.
On the other hand, self-awareness states that if $\phi$ is derivable from $i$'s knowledge, then $K_i \phi$ is also derivable.
For example, $\phi \vee \neg \phi$ is provable from no assumptions.
Therefore, from axiom A1 it is derivable $\kbi \der \phi \vee \neg \phi$ for all $\kbi$.
Consequently, by necessitation it also holds that $\kbi \der K_j \phi \vee \neg \phi$ for all $\kbi$ and $j \in \ag$.
However, consider now a predicate $p(\vv{t})$ which is not derivable from no assumptions.
It does not hold that $\kbi \der p(\vv{t})$ for all $\kbi$.
There is no axiom which includes $p(\vv{t})$ in the set of derivations of $\der$.
Nevertheless, self-awareness says that if $\kbi \der p(\vv{t})$ then $\kbi \der K_i p(\vv{t})$.
Note that, unlikely necessitation, we use the same agent $i$ in $\kbi$ and $K_i p(\vv{t})$.

\begin{example}
  \label{ex:1}
  Let $\snv$ be an SNM consisting of three agents Alice, Bob and Charlie, $\ag = \{\alice, $ $\bob, \charlie\}$; the friend request action, $ \actSet = \{\friendRequest\}$; and the connections Friend and Blocked, $\conSet = \{\Friend, \Blocked\}$.
  Here, we define $\dom$ to be a finite set of timestamps.

  Fig.~\ref{fig:snmexample} shows a graphical representation of \snv.
  In this model the dashed arrows represent connections.
  Note that the Friend connection is bidirectional, i.e., Alice is friend with Bob and vice versa.
  On the other hand, it is also possible to represent unidirectional connections, as Blocked; in \snv~Bob has blocked Charlie.
  Permissions are represented using a dotted arrow. In this example, Charlie is able to send a friend request to Alice.

  The predicates inside each node represent the agents' knowledge, e.g., Alice has $\post(\bob, \pub, 1)$ in her knowledge base, meaning that she knows that Bob posted at time 1 that he was in a pub.
  Similarly, Charlie's knowledge base contains the predicate $\post(\bob,\library,2)$ meaning that at time 2 Bob posted that he was in the library.
  Agents' nodes can also contain more complex \kbl\ formulae that may increase their knowledge.
  For instance, Alice knows $\loc(\bob,\pub,1)$ implicitly.
  Alice can in fact derive it by {\em Modus Ponens},
  from $\post(\alice, \pub, 1)$ and $\forall t. (\post(\alice, \pub, t) \implies \loc(\bob, \pub, t))$.
  The variable $t$ ranges over \dom, which, as mentioned earlier, consists in a finite set of timestamps.
  Being able to derive $\loc(\bob, \pub, 1)$ means that Alice knows that Bob's location at time 1 was a pub.
\end{example}

The satisfaction relation for \kbl~formulae, interpreted over SNMs, is defined as follows.

\begin{definition}[]
  Given an SNM $\snv = \langle \ag, \struct, \kb, \poval \rangle$,
  agents $i,j$ $\in \ag$, formulae $\phi, \psi \in \wkbl$, a finite
  set of agents $G \subseteq \ag$, $m \in \conSet$ and $n \in \actSet$,
  the {\em satisfaction relation} $\sat\ \subseteq \sn \times \kbl$
  is defined in Table \ref{tab:satkbl}.
\end{definition}

\begin{table}[!t]
  \centering
  \begin{tabular}{l c l}
    $\snv \sat p(\vv{t}) $& \mbox{iff} & $p(\vv{t}) \in \kb_{e}$\\
    $\snv \sat c_m(i,j)$ & \mbox{iff} & $(i,j) \in C_m$\\
    $\snv \sat a_n(i,j)$ & \mbox{iff} & $(i,j) \in A_n$\\
    $\snv \sat \neg \phi $ & \mbox{iff} & $\snv \not \sat \phi$ \\
    $\snv \sat \phi \wedge \psi $ & \mbox{iff} & $\snv \sat \phi \mbox{ and } \snv \sat \psi$\\
    $\snv \sat \forall x. \phi$ & \mbox{iff} & \mbox{ for all  } $v \in \dom$,  $\snv \sat \phi[v/x]$ \\
    $\snv \sat K_i \phi$ & \mbox{iff} & $\kbi \der \phi$\\
  \end{tabular}
  \caption{\kbl~satisfaction relation\label{tab:satkbl}}
\end{table}

The intuition behind the semantic definition of the knowledge modality is different in \kbl\ from that of epistemic logic.
As shown in Table~\ref{tab:kripkesat}, the accessibility relation in Kripke models captures the {\em uncertainty} of the agents.
It models all the states that an agent consider possible and knowledge is acquired when a given formula is true in all those states.
In SNMs, knowledge is explicitly present in the knowledge bases of the agents, hence modelling what the agents know rather than what they consider possible.
A given formula is known by an agent if it is present in her knowledge base or if she can derive it from her knowledge.
We use a special agent called {\em environment} (or simply $e$) which defines the truth of atomic formulae of the type $p(\vv{t})$.
The environment's knowledge base ($\kb_e$) contains all predicates which are true in the real world.
For instance, $\location(\alice, \mathit{Sweden})$ is in $\kb_e$ only if Alice's location is Sweden or, similarly, only if Bob's age is 20 the predicate $\age(\bob,20)$ is in $\kb_e$.

\begin{example}
  Let \snv~be the SNM in Fig.~\ref{fig:snmexample}.
  As described in Example~\ref{ex:1}, Alice knows that Bob posted that he was in a pub at time 1, meaning that $\snv \sat K_\alice \post(\bob, \pub, 1)$ holds.
  Indeed, it holds since $\post(\bob,\pub,1)$ is in the knowledge base of Alice, i.e., $\post(\bob,\pub,1) \in \kb_\alice$ and therefore it can be derived $\kb_\alice \der \post(\bob,\pub,1)$ (1).
Though not explicitly stated, it is possible for Alice to derive that Bob's location at time 1 was a pub, meaning that $\snv \sat K_\alice \loc(\bob,\pub,1)$ (2) should hold.
  Following the semantics of $K_i$ in Table~\ref{tab:satkbl}, the previous formula is true iff $\kb_\alice \der \loc(\bob,\pub,1)$.
  Fig.~\ref{fig:snmexample} shows that $\kb_\alice$ contains the formula $\forall t. (\post(\bob,\pub,t) \implies \loc(\bob,\pub,t))$ (3)---where $t$ is a timestamp
 ---therefore the deductive engine derives $\post(\bob,\pub,1) \implies \loc(\bob,\pub,1)$ (4).
From (1) and (4), by {\em modus ponens} we can derive $\loc(\bob,\pub,1)$, i.e., $\kb_\alice \der \loc(\bob,\pub,1)$, hence (2) holds.
\end{example}


\section{Model checking SNMs}\label{sec:mc}

In this section we present a model checking algorithm that directly implements the semantics of \kbl~in Table~\ref{tab:satkbl}, and we show that model checking is decidable under the following assumptions:
\begin{assumption}
  \label{asp:finitedomains}
  All domains are finite.
\end{assumption}
\begin{assumption}
  \label{asp:computablefunctions}
  All functions are computable.
\end{assumption}

These assumptions are present in all real social networks. Domains in SNMs might be, the set of users, posts, pictures, likes, tags and so on. In practice at any moment in time there is a finite amount of any of these elements. Consequently, when having a universal quantification over a domain it is reasonable to consider only the finite set of elements in the domain at that concrete moment in time. Furthermore, we assume that functions in \kbl~terms must be computable. As mentioned in the introduction, \kbl~is a logic embedded in a framework to express privacy policies. The framework includes the notion of instantiation where all the elements of SNMs are instantiated for a concrete social network.
For instance, in~\cite{PS14fpp} 
we presented the instantiations of Facebook and Twitter. In these instantiations functions were used to retreive information, e.g., $\mathit{followers}(u)$ which returns all the followers of the user or $\mathit{friends}(u)$ which returns all the friends of $u$. Another type of functions could be $\mathit{weather}(\mathit{London})$ or $\mathit{location}(u)$, which return the current weather in London and $u$'s current location, respectively. Therefore, computable functions are enough for the practical use of the logic.

\begin{theorem}
  Let \snv~be an SNM and $\phi \in \wkbl$ be a formula.
  Determining whether $\snv \sat \phi$ is decidable.
  \label{thm:mcdecidable}
\end{theorem}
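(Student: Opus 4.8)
The plan is to read the clauses of the satisfaction relation in Table~\ref{tab:satkbl} as a recursive procedure and to argue, by structural induction on $\phi \in \wkbl$, that each clause is effectively computable. The atomic clauses are immediate: deciding $p(\vv{t}) \in \kb_e$, $(i,j) \in C_m$, or $(i,j) \in A_n$ are membership tests in finite sets, since the knowledge bases and the connection/permission relations of an SNM are finite (and by Assumption~\ref{asp:computablefunctions} the ground terms in $\vv{t}$ can be evaluated beforehand). The Boolean clauses reduce to finitely many recursive calls on strictly smaller formulae, and, by Assumption~\ref{asp:finitedomains}, the clause for $\forall x.\phi$ reduces to the finite family of recursive calls $\{\, \snv \sat \phi[v/x] \mid v \in \dom \,\}$. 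Hence the only non-trivial clause is $\snv \sat K_i \phi$, which by definition asks whether $\kbi \der \phi$ holds in the \kdfouraxioma~axiomatisation of Definition~\ref{def:derivation}.

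To decide $\kbi \der \phi$ I would first strip away the first-order features using the two assumptions: evaluate every ground function term to its value in the (finite) domain, and rewrite every subformula $\forall x.\psi$ as the finite conjunction $\bigwedge_{v \in \dom} \psi[v/x]$. Because $\kbi$ is a finite set, $\phi$ a single formula, and the quantifier-nesting depth bounded, this yields a finite set $\kbi^{\circ}$ of propositional modal formulae and a propositional modal formula $\phi^{\circ}$, both built over a finite stock of ground atoms; the rewriting is faithful since the domain of $\snv$ is exactly $\dom$. Thus $\kbi \der \phi$ is equivalent to the derivability of $\phi^{\circ}$ from the finite premise set $\kbi^{\circ}$ in \emph{propositional} \kdfouraxioma~(and, if one reads the system so that necessitation is applied only to theorems---as suggested by Remark~1---the deduction theorem further reduces this to the theoremhood of the single formula $(\bigwedge_{\psi \in \kbi^{\circ}} \psi) \then \phi^{\circ}$).

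It then remains to invoke the decidability of propositional \kdfouraxioma: this modal logic has the finite model property, so a (finite-premise) consequence holds iff it holds in every serial and transitive Kripke model below a computable size bound, which can be checked exhaustively---equivalently, it is settled by a terminating tableau procedure. Hence $\kbi \der \phi$ is decidable, which closes the induction and proves the theorem. The knowledge-consistency and self-awareness requirements (Definitions~\ref{def:kb-consistent} and~\ref{def:self-awareness}) are not additional proof obligations for the algorithm: they are properties guaranteed to hold in any well-formed SNM and play no role in the decision procedure.

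The main obstacle is the $K_i$ clause, and specifically the passage from $\kbi \der \phi$ to a finite propositional problem. One has to make sure the collapse of quantifiers over the finite domain is sound with respect to the \emph{syntactic} derivation relation (not merely with respect to SNM-satisfaction), to control the interaction between the first-order layer and the modal proof rules, and to pin down the decidability of derivability from a finite premise set in \kdfouraxioma. The remaining clauses are routine bookkeeping over finite structures.
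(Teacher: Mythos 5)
Your proposal follows essentially the same route as the paper's proof: eliminate quantifiers by expansion over the finite domain (Assumption~\ref{asp:finitedomains}), evaluate function terms (Assumption~\ref{asp:computablefunctions}), and then induct on the structure of $\phi$, with every clause a finite membership or Boolean test except $K_i\phi$, which reduces to deciding $\kbi \der \phi$. The only difference is that the paper discharges the decidability of $\kbi \der \phi$ by citing \cite{FHM+95rk}, whereas you spell out the reduction to propositional \kdfouraxioma\ and its finite model property; this is a correct elaboration of the same step, not a different argument.
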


\begin{proof}
  We show decidability of the model checking problem for \kbl\ by presenting an algorithm which implements the semantics of Table~\ref{tab:satkbl},

  First, we expand the universal quantifiers in $\phi$ by inductively transforming each subformula $\forall x.\phi'$ into a conjunction with one conjunct $\phi'[v/x]$ for each element $v$ of the domain \dom. Given that the domain is finite (see Assumption~\ref{asp:finitedomains}), it always terminates and results in a quantifier free formula.
  Secondly, we compute all functions and replace all constants with an element of the domain according to the assignment in \struct. From Assumption~\ref{asp:computablefunctions}, we can deduce that this step always terminates. After this step we are left with a quantifier free formula without functions or constant symbols.
  Finally, we inductively show that all the elements of the formula (see Def.~\ref{kbl:syntax}) can be computed.
    \begin{compactitem}
    \item Checking $c_m(i,j)$ and $a_n(i,j)$ can be performed in constant time, simply by checking $(i,j) \in C_m$ or $(i,j) \in A_n$, respectively.
    \item Checking $p(\vv{t})$ requires the query $p(\vv{t}) \in \kb_e$ to the environment's knowledge base.
          It can be performed in constant time.
    \item $\neg \phi$ and $\phi_1 \wedge \phi_2$ can be done in constant time, using the induction hypothesis.
    \item $K_i \phi$ requires a query to the epistemic engine to determine  $\kbi \der \phi$.
          Solving the previous query is a decidable problem \cite{FHM+95rk}.
    \end{compactitem}
    The algorithm goes recursively from the top most element of $\phi$ to the bottom.
\end{proof}

In Section~\ref{sec:complexity} we study the complexity of this algorithm and compare it to that of model checking in traditional Kripke models.
Nevertheless, in order to provide a fair comparison, we first show that the same set of properties of knowledge that are sound w.r.t.~Kripke models are also sound w.r.t.~SNMs.

\section{Properties of Knowledge in SNMs}
\label{sec:axioms}

Here we explore properties of knowledge in SNMs.
 In particular, we consider the axioms of some of the standard axiomatisations for epistemic logic, and prove that such axioms are sound with respect to SNMs.

In \cite{FHM+95rk} Fagin~\etal show which properties of knowledge are sound w.r.t. Kripke models depending on the type of accessibility relation of the model.
For instance, the following axiom is sound w.r.t. the set of Kripke models where the accessibility relation is reflexive: $\mbox{(A3) } K_i \phi \implies \phi.$

These properties of knowledge comprise the different axiomatisations of epistemic logic.
In SNMs the properties of knowledge will depend on the axiomatisation from epistemic logic~\cite{FHM+95rk} that we choose for $\der$.
As we described in Def.~\ref{def:derivation}, \der\ includes all the axioms and derivation rules from \kdfouraxioma.


In epistemic logic one can talk about {\em knowledge} or {\em belief} depending on the properties (or axiomatisations) that are sound w.r.t.~a particular set of Kripke models.
Axiom A3 is commonly called {\em Knowledge axiom}.
It means that the facts agents know are true.
When this axiom is not present, the ``knowledge'' of the agents is regarded as belief.
As you might have noticed, in SNMs the truth of the facts that the agents know is not linked to whether they are true or not.
For example, imagine that Alice knows that Bob and Charlie are friends, i.e., $K_\alice \friend(\bob,\charlie)$, which is true iff $\kb_\alice \der \friend(\bob, \charlie)$.
This is not connected to the actual truth of the predicate $\friend(\bob, \charlie)$, which holds iff $(\bob, \charlie) \in C_\Friend$. When the knowledge axiom is not present, some philosophers argue that it is required that the beliefs of the agents are consistent. This is captured by the following axiom, where $\bot$ represents \emph{falsum}: $\mbox{(D) } \neg K_i \bot.$

In Kripke models, axiom D is present when the accessibility relation is serial~\cite{FHM+95rk}. In SNMs, we assume agents' knowledge bases to be consistent (see Def.~\ref{def:kb-consistent}). Therefore, $\bot$ cannot be derived.
\begin{lemma}
  \label{lemma:Dholdspaper}
  Axiom D is sound with respect to SNMs.
\end{lemma}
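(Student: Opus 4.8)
The plan is to show that for every SNM $\snv$ and every agent $i \in \ag$ we have $\snv \sat \neg K_i \bot$, which by the satisfaction clauses of Table~\ref{tab:satkbl} amounts to showing $\kbi \not\der \bot$. So the statement unfolds to a purely syntactic claim about the deductive closure of each knowledge base, and the work is to derive this from Knowledge Consistency (Def.~\ref{def:kb-consistent}).

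First I would make precise what $\bot$ abbreviates: $\bot \triangleq \phi \wedge \neg\phi$ for some fixed formula $\phi$ (equivalently $\neg(\phi \vee \neg\phi)$), so that $K_i\bot$ is well-formed \kbl. Then I would argue by contraposition: suppose, for some agent $i$, that $\kbi \der \bot$. Using the \kdfouraxioma\ axioms and rules underlying $\der$ (in particular propositional reasoning, which gives $\bot \implies \psi$ for every $\psi$), one gets $\kbi \der \psi$ for every $\psi \in \wkbl$; in particular, picking any formula $\psi$, we obtain both $\kbi \der \psi$ and $\kbi \der \neg\psi$. This directly contradicts Knowledge Consistency, which forbids $\kbi \der \psi$ and $\kbi \der \neg\psi$ simultaneously. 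Hence $\kbi \not\der \bot$, and therefore, by the clause for $K_i$ in Table~\ref{tab:satkbl}, $\snv \not\sat K_i\bot$, i.e.\ $\snv \sat \neg K_i\bot$. Since $i$ and $\snv$ were arbitrary, axiom D is sound with respect to all SNMs.

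The only mild subtlety — and the step I would be most careful about — is the passage from $\kbi \der \bot$ to a genuine pair $\kbi \der \psi$, $\kbi \der \neg\psi$ of derivable formulas, since Knowledge Consistency is phrased in terms of such a pair rather than in terms of $\bot$ directly. This is routine once one observes that $\der$ contains all propositional tautologies (via axiom A1) and is closed under modus ponens, so $\bot \implies \psi$ and $\bot \implies \neg\psi$ are available for any $\psi$; instantiating with, say, $\psi = \bot$ itself, or with any atomic predicate, closes the gap. No appeal to the epistemic axioms beyond this propositional core is needed, so the argument is short and does not depend on which of \kaxioma/\kdaxioma/\kdfouraxioma\ is chosen, as long as Knowledge Consistency is assumed.
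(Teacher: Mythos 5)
Your proposal is correct and follows essentially the same route as the paper: unfold $\snv \sat \neg K_i \bot$ to $\kbi \not\der \bot$ and conclude from Knowledge Consistency (Def.~\ref{def:kb-consistent}) that $\bot$ cannot be derived in any knowledge base. Your extra care about turning $\kbi \der \bot$ into an explicit pair $\kbi \der \psi$, $\kbi \der \neg\psi$ via propositional reasoning is a reasonable elaboration of the step the paper treats as immediate.
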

As we mentioned in the introduction, \kbl~and SNMs were developed in the context of a privacy policy framework for social networks~\cite{PS14fpp,PardoLic}.
In privacy policies it is more natural to write ``Alice cannot know my location'' than ``Alice cannot belief my location''. Because of this, we chose to talk about knowledge, even though we are dealing with an axiomatisation for belief.

The most basic set of properties for Kripke models, i.e., the set of properties that are sound w.r.t.~Kripke models with no conditions in their accessibility relation, is the \kaxioma~axiomatisation \cite{FHM+95rk}.
It consists of two axioms and two inference rules.
Given $\phi \in \lang$ and $i \in \ag$,
\begin{compactitem}[]
  \item A1. All (instances of) first-order tautologies,
  \item A2. $(K_i\phi \wedge K_{i}(\phi \implies \psi)) \implies K_i\psi$, 
  \item R1. From $\phi$ and $\phi \implies \psi$ infer $\psi$, 
  \item R2. From $\phi$ infer $K_i \phi$ where $\phi$ must be provable from no assumptions. 
\end{compactitem}

\begin{lemma}
  \label{lemma:ksoundnesspaper}
  \kaxioma~is sound with respect to SNMs.
\end{lemma}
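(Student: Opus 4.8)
The goal is to show that the \kaxioma{} axiomatisation---axioms A1, A2 and inference rules R1, R2---is sound with respect to SNMs, i.e., that every SNM satisfies all instances of A1 and A2, and that R1 and R2 preserve satisfaction over SNMs. The plan is to check each of the four items in turn against the satisfaction relation of Table~\ref{tab:satkbl}, reducing each case to a property of the underlying derivability relation \der{}, which by Def.~\ref{def:derivation} contains all of \kdfouraxioma{} (and hence all of \kaxioma).

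First I would handle A1. Since we assume formulae contain no free variables and terms are interpreted uniformly, the propositional and first-order structure of $\snv \sat \cdot$ behaves classically: $\neg$ is interpreted as complement, $\wedge$ as meet, and $\forall x$ as a (finite, by Assumption~\ref{asp:finitedomains}) conjunction over the domain. A routine induction on the structure of a first-order tautology $\phi$ then shows $\snv \sat \phi$; equivalently one observes that the map sending an atomic formula to its truth value under $\snv$ extends to a boolean valuation, and tautologies are true under every boolean valuation. Next, R1: if $\snv \sat \phi$ and $\snv \sat \phi \then \psi$, then since $\then$ abbreviates the classical connective and the clauses for $\neg$ and $\wedge$ in Table~\ref{tab:satkbl} are the classical ones, $\snv \sat \psi$ follows immediately by the same boolean reasoning.

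For A2 I would unfold the semantics of $K_i$: $\snv \sat K_i\phi$ iff $\kbi \der \phi$, and $\snv \sat K_i(\phi \then \psi)$ iff $\kbi \der \phi \then \psi$. From these two derivations, one application of the R1-style modus ponens rule of \der{} (which is among the \kdfouraxioma{} rules) yields $\kbi \der \psi$, i.e., $\snv \sat K_i\psi$; combining the hypotheses of the conjunction in the antecedent of A2 with the classical reading of $\wedge$ and $\then$ gives $\snv \sat (K_i\phi \wedge K_i(\phi\then\psi)) \then K_i\psi$. For R2, suppose $\sat \phi$ in the sense of validity over SNMs and, more to the point, that $\phi$ is provable from no assumptions in \kdfouraxioma; then $\emptyset \der \phi$, hence $\kbi \der \phi$ for every $i$ (derivations are monotone in the assumption set), so $\snv \sat K_i\phi$, and since $\snv$ was arbitrary, $\sat K_i\phi$. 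I should be careful here to use exactly the side condition stated for R2 in the excerpt (``$\phi$ must be provable from no assumptions''), matching it to the necessitation rule of \kdfouraxioma, rather than the weaker ``$\snv \sat \phi$''.

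The main obstacle, such as it is, is not deep but bookkeeping: one must be precise about the interface between the two layers---the SNM satisfaction relation $\sat$ and the agent-level derivability relation $\der$---and in particular verify that all the ingredients invoked for A2 and R2 (modus ponens, monotonicity of $\der$ in its hypotheses, and the necessitation rule) genuinely belong to \kdfouraxioma{}, which is guaranteed since \kaxioma{} is a subsystem of \kdfouraxioma{} and Def.~\ref{def:derivation} fixes \der{} as the \kdfouraxioma{} consequence relation. The only subtlety worth flagging explicitly is that the soundness of A1 over SNMs relies on Assumption~\ref{asp:finitedomains} to make the $\forall$ clause a genuine finite conjunction, so that the classical-valuation argument goes through without set-theoretic complications.
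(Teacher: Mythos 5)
Your proof is correct and follows the natural (and, as far as can be judged, the paper's own) route: A1 and R1 reduce to the classical clauses for $\neg$, $\wedge$ and the finite substitutional reading of $\forall$; A2 reduces to closure of $\der$ under modus ponens via the clause $\snv \sat K_i\phi$ iff $\kbi \der \phi$; and R2 reduces to monotonicity of $\der$, correctly using the ``provable from no assumptions'' side condition that the paper emphasises in Remark~1. No gaps.
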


The axioms and inferences rules of \kaxioma, together with axiom D comprises the axiom system \kdaxioma.
Nevertheless, there exist two more axioms that are normally present in knowledge and belief axiomatisations, the so called {\em positive introspection} (A4) and {\em negative introspection} (A5)~\cite{FHM+95rk}.
The former expresses that agents in the system are aware of their knowledge, the latter means that agents know everything that they do not know.
Given $\phi \in \lang$ and $i \in \ag$

\begin{compactitem}[]
\item A4. $K_i \phi \implies K_i K_i \phi$,
\item A5. $\neg K_i \phi \implies K_i \neg K_i \phi$.
\end{compactitem}
\begin{lemma}
  \label{lemma:a4a5paper}
  Axiom A4 is sound with respect to SNMs.
\end{lemma}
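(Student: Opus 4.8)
The goal is to show that axiom A4, $K_i\phi \implies K_i K_i \phi$, is sound with respect to SNMs; that is, for every SNM $\snv$, every agent $i \in \ag$, and every formula $\phi \in \wkbl$, we have $\snv \sat K_i\phi \implies K_i K_i \phi$. Since the top connective is an implication (classically, $\neg(K_i\phi) \vee K_iK_i\phi$), the plan is to unfold the satisfaction relation of Table~\ref{tab:satkbl}: it suffices to show that whenever $\snv \sat K_i\phi$ holds, $\snv \sat K_iK_i\phi$ also holds.

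First I would apply the semantics of the knowledge modality: $\snv \sat K_i\phi$ means, by definition, $\kbi \der \phi$. Similarly, $\snv \sat K_iK_i\phi$ unfolds to $\kbi \der K_i\phi$. So the whole soundness claim reduces to the purely proof-theoretic statement: if $\kbi \der \phi$ then $\kbi \der K_i\phi$. But this is \emph{exactly} the Self-Awareness property (Def.~\ref{def:self-awareness}), which we assume holds of every agent's knowledge base in any SNM. Hence the implication $\snv \sat K_i\phi \implies \snv \sat K_iK_i\phi$ holds, and by the semantics of $\neg$ and $\wedge$ (equivalently, of $\implies$) in Table~\ref{tab:satkbl}, we conclude $\snv \sat K_i\phi \implies K_i K_i\phi$.

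I expect the only subtlety — and the point worth spelling out — is precisely the one flagged in Remark~1 of the paper: one might be tempted to try to derive A4 from the positive introspection axiom of \kdfouraxioma\ applied \emph{inside} $\der$, but that is not quite the right move, because $\der$ reasons about formulae of the form $K_i\phi$ syntactically whereas the semantic claim is about what holds in the SNM. The clean route is the one above: A4's soundness is not obtained from the A4 axiom of the internal logic, but directly from Self-Awareness, which is the SNM-level analogue. So the ``hard part'' is really just recognizing that Self-Awareness is the exact semantic counterpart of positive introspection and that no appeal to the internal \kdfouraxioma\ axiom is needed; once that is seen, the proof is a two-line unfolding of Table~\ref{tab:satkbl} together with Def.~\ref{def:self-awareness}.
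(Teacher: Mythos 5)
Your proof is correct and follows exactly the route the paper takes: the paper's justification for this lemma is precisely that ``A4 follows from our assumption that agents are self-aware of their knowledge (Def.~\ref{def:self-awareness})'', i.e.\ unfolding $\snv \sat K_i\phi$ to $\kbi \der \phi$ and $\snv \sat K_iK_i\phi$ to $\kbi \der K_i\phi$, and then applying Self-Awareness. Your additional remark distinguishing Self-Awareness from the internal \kdfouraxioma\ introspection axiom is a correct and worthwhile clarification, but the core argument is the same.
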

\begin{lemma}
  Axiom A5 is not sound with respect to SNMs.
\end{lemma}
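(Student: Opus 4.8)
The plan is to falsify a single instance of A5 in a single, very simple SNM. Recall that, by Table~\ref{tab:satkbl}, $\snv \sat K_i \psi$ iff $\kbi \der \psi$; hence an instance $\neg K_i \phi \implies K_i \neg K_i \phi$ fails in $\snv$ precisely when $\kbi \not\der \phi$ (so that $\snv \sat \neg K_i \phi$) and simultaneously $\kbi \not\der \neg K_i \phi$ (so that $\snv \not\sat K_i \neg K_i \phi$). Thus it suffices to exhibit a legal SNM together with a formula $\phi$ satisfying $\kbi \not\der \phi$ and $\kbi \not\der \neg K_i \phi$.

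First I would take $\snv$ to have an agent $i$ whose knowledge base is empty, $\kbi = \emptyset$ (the remaining components---domain, relational structure, policies, and the other agents' knowledge bases---may be chosen arbitrarily among the admissible ones). I then need to check that this is indeed a well-formed SNM, i.e., that $\kbi = \emptyset$ satisfies Knowledge Consistency (Def.~\ref{def:kb-consistent}) and Self-Awareness (Def.~\ref{def:self-awareness}). Both are immediate once we observe that $\emptyset \der \chi$ holds iff $\chi$ is a theorem of \kdfouraxioma: consistency of \kdfouraxioma\ gives that no $\chi$ and $\neg\chi$ are simultaneously theorems, and closure of the theorems of \kdfouraxioma\ under the necessitation rule R2 gives that $\emptyset \der \chi$ implies $\emptyset \der K_i \chi$.

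Next I would pick $\phi := p(\vv{t})$ for a relation symbol $p$ and a tuple of closed terms $\vv{t}$. The two facts to establish are: (a) $p(\vv{t})$ is not a theorem of \kdfouraxioma; and (b) $\neg K_i p(\vv{t})$ is not a theorem of \kdfouraxioma. Both follow from soundness of \kdfouraxioma\ with respect to the class of serial and transitive Kripke frames: in the one-point reflexive Kripke model---which is serial and transitive---with $p(\vv{t})$ true at its unique world, $K_i p(\vv{t})$ holds, so $\neg K_i p(\vv{t})$ is not valid; and $p(\vv{t})$ is plainly not valid either. Consequently $\kbi = \emptyset \not\der p(\vv{t})$ and $\kbi = \emptyset \not\der \neg K_i p(\vv{t})$, so by the observation above $\snv \sat \neg K_i p(\vv{t})$ while $\snv \not\sat K_i \neg K_i p(\vv{t})$; hence $\snv \not\sat \neg K_i p(\vv{t}) \implies K_i \neg K_i p(\vv{t})$, which is an instance of A5.

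The only delicate point is step (b), the claim that $\neg K_i p(\vv{t})$ is not derivable from the empty knowledge base; I would discharge it via the soundness half of the completeness theorem for KD4 rather than by an induction over \kdfouraxioma-derivations. Everything else---verifying that $\snv$ is admissible and unwinding the satisfaction clauses for $\neg$ and $K_i$---is routine. Intuitively this is exactly what one should expect: $\der$ is built on \kdfouraxioma, which lacks negative introspection, so no SNM can validate it.
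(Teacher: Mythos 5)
Your counterexample is correct and matches the paper's own line of argument: A5 fails precisely because nothing in an admissible knowledge base forces $\neg K_i \phi$ to be derivable when $\phi$ is not, and the empty knowledge base (which you rightly check satisfies Knowledge Consistency and Self-Awareness) witnesses this. One tiny wording slip: to conclude that $p(\vv{t})$ is not a \kdfouraxioma-theorem you need a model in which $p(\vv{t})$ is \emph{false} at some world, not the one-point model where you made it true for part (b) --- but that is immediate and does not affect the argument.
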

A4 follows from our assumption that agents are self-aware of their knowledge (see Def.\ref{def:self-awareness}).
On the other hand, A5 does not follow given the current set of assumptions in knowledge bases.
An agent's knowledge base does not contain any knowledge regarding what she does not know, unless it is explicitly inserted.

The axiomatisation \kaxioma~together with axioms D and A4 forms the so-called \kdfouraxioma~axiomatisation.
We thus have the following result for SNMs.

\begin{theorem}
  \label{thm:kd45holdspaper}
  \kdfouraxioma\ is sound with respect to SNMs.
\end{theorem}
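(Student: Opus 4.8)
The plan is to obtain the result as a direct corollary of the three soundness lemmas already established, exploiting the fact that $\kdfouraxioma$ is by definition nothing more than $\kaxioma$ augmented with the two extra axiom schemas D and A4, with no additional inference rules. Recall that for an axiom system to be \emph{sound with respect to SNMs} means that every formula provable in the system is satisfied by every SNM; by the standard argument it suffices to check that every instance of an axiom schema of the system is satisfied by every SNM and that every inference rule of the system preserves this property.

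First I would record that all axiom schemas of $\kdfouraxioma$ are already known to be sound with respect to SNMs: A1 and A2 by Lemma~\ref{lemma:ksoundnesspaper} (soundness of $\kaxioma$), axiom D by Lemma~\ref{lemma:Dholdspaper}, and axiom A4 by Lemma~\ref{lemma:a4a5paper}. Since $\kdfouraxioma$ contains no axiom schema beyond A1, A2, D and A4, every axiom instance that can occur in a $\kdfouraxioma$-derivation is satisfied by every SNM. Second, the only inference rules of $\kdfouraxioma$ are R1 (modus ponens) and the restricted rule R2, and both of these are already inference rules of $\kaxioma$; Lemma~\ref{lemma:ksoundnesspaper} therefore already guarantees that they preserve satisfaction over SNMs.

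Combining these two observations, the proof closes with a straightforward induction on the length of a $\kdfouraxioma$-derivation $\phi_1,\ldots,\phi_n = \phi$: each $\phi_k$ is either an instance of an axiom schema (satisfied by every SNM by the first observation) or obtained from earlier formulas $\phi_{j},\phi_{j'}$ with $j,j' < k$ by R1 or R2 (and those premises are, by the induction hypothesis, satisfied by every SNM, so by the second observation so is $\phi_k$). Hence $\phi = \phi_n$ is satisfied by every SNM, which is exactly soundness of $\kdfouraxioma$ with respect to SNMs.

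I do not expect a genuine obstacle: the statement is essentially a repackaging of Lemmas~\ref{lemma:Dholdspaper}, \ref{lemma:ksoundnesspaper} and~\ref{lemma:a4a5paper}. The one point that deserves care is to make explicit that $\kdfouraxioma = \kaxioma + \text{D} + \text{A4}$ introduces no further axiom schemas or inference rules, so that these three lemmas jointly cover every possible step of a $\kdfouraxioma$-proof; and to note that, although the SNM-semantics of $K_i$ is itself defined via $\der$ (the $\kdfouraxioma$-derivability relation), no circularity arises, since each of the cited lemmas already reasons about precisely that semantics.
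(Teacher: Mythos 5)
Your proof is correct and follows exactly the route the paper intends: since $\kdfouraxioma$ is just $\kaxioma$ extended with the axiom schemas D and A4 (no new inference rules), the theorem is obtained by combining Lemma~\ref{lemma:ksoundnesspaper}, Lemma~\ref{lemma:Dholdspaper} and Lemma~\ref{lemma:a4a5paper} via the standard induction on derivation length. Your closing remark about the non-circularity of using $\der$ (itself defined via $\kdfouraxioma$) inside the semantics of $K_i$ is a worthwhile clarification, but the argument is otherwise the same as the paper's.
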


\subsubsection*{Common Knowledge}
\label{sec:common}

Here we introduce the notion of {\em common knowledge}, which we
represent using the modality $C_G$ where $G$ is a group of agents. A
fact becomes common knowledge when everybody knows it, and also,
everyone knows that everyone knows it, and so forth. This is a useful
concept in the social network setting. Consider the effect of publishing a post
$p(\vv{t})$ in a social network. After posting, the owner of the post and the
audience will know the post, $E_{\{\owner\} \cup \audience} ~
p(\vv{t})$. Moreover, the owner also will know that everyone who was
included in the audience will know the post, $K_{\owner}E_{\audience}
~ p(\vv{t})$. But even more, each of the users in the audience will
know that each other knows the post, i.e. $E_{\{\owner\} \cup
  \audience} E_{\{\owner\} \cup \audience} ~ p(\vv{t})$ and so on. The
traditional definition of common knowledge \cite{FHM+95rk} over Kripke
models accurately captures the described effect. Given a Kripke model
$M$, a state $s \in M$, a formula $\phi \in \lang$ and a set of agents
$G$, common knowledge is defined as follows:
$ (M, s) \sat C_G \phi \mbox{ iff } (M, s) \sat E^{k}_G \phi \mbox{
  for } k = 1 \ldots $ where $E^{0}_G \phi = \phi$ and $E^{k+1}_G
\phi = E_G \phi E^{k}_G \phi$. The definition of common knowledge for
SNMs is analogous to the one above.
\begin{definition}[]
  Given an SNM $\snv$, a formula $\phi \in \wkbl$ and a set of agents
  $G$, common knowledge is defined as follows:
  $ \snv \sat C_G \phi \mbox{ iff } \snv \sat E^{k}_G \phi \mbox{ for
  } k = 1 \ldots $
\end{definition}

Given formulae $\phi, \psi \in \lang$, the set $G \subseteq \ag$ and $i \in \ag$, the following axiomatisation characterises common knowledge \cite{FHM+95rk}:
\begin{compactitem}[]
\item C1. $E_G \phi \Longleftrightarrow \bigwedge_{i \in G} K_i \phi$,
\item C2. $C_G \phi \Longleftrightarrow E_G (\phi \wedge C_G \phi)$,
\item RC1. From $\phi \implies E_G (\psi \wedge \phi)$ infer $\phi
  \implies C_G \psi$ where $\phi \implies E_G (\psi \wedge \phi)$ must
  be provable from no assumptions.
\end{compactitem}
\noindent
\begin{lemma}
  \label{lemma:commonholdspaper}
  The axioms C1 and C2, and the rule RC1 are sound w.r.t.~SNMs.
\end{lemma}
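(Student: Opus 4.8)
The plan is to treat the two axioms and the rule separately, relying on the soundness of \kaxioma\ and \kdfouraxioma\ over SNMs (Lemma~\ref{lemma:ksoundnesspaper}, Theorem~\ref{thm:kd45holdspaper}) and on one auxiliary fact: SNM-knowledge distributes over finite conjunctions. So I would first record that, for every SNM $\snv$, agent $i$ and formulae $\alpha,\beta$, $\snv \sat K_i(\alpha\wedge\beta)$ iff $\snv \sat K_i\alpha$ and $\snv \sat K_i\beta$ --- equivalently $\kbi\der\alpha\wedge\beta$ iff $\kbi\der\alpha$ and $\kbi\der\beta$, which is immediate from derivability of conjunction introduction and elimination in \kdfouraxioma. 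Hence $E_G$ distributes over finite conjunctions and, by an easy induction on $k$, the formulae $E^k_G(\alpha\wedge\beta)$ and $E^k_G\alpha\wedge E^k_G\beta$ are satisfied by exactly the same SNMs. With this in hand, C1 is immediate: $E_G\phi$ \emph{is}, by definition, the formula $\bigwedge_{i\in G}K_i\phi$, so $\snv\sat E_G\phi \Leftrightarrow \bigwedge_{i\in G}K_i\phi$ holds in every SNM just by the clause for $\wedge$ in Table~\ref{tab:satkbl}.

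For C2 I would first lift the $K_i$ clause to formulae mentioning $C_G$ the same way it is done for Kripke models, reading $C_G\psi$ as the infinitary conjunction $\bigwedge_{k\geq1}E^k_G\psi$; thus $\snv\sat K_i C_G\psi$ iff $\kbi\der E^k_G\psi$ for all $k\geq1$, and $\snv\sat K_i(\psi\wedge C_G\psi)$ iff $\snv\sat K_i\psi$ and $\snv\sat K_i C_G\psi$ by distribution. Unfolding $E_G(\phi\wedge C_G\phi)$ along these lines, $\snv\sat E_G(\phi\wedge C_G\phi)$ becomes ``for every $i\in G$: $\snv\sat K_i\phi$ and $\snv\sat K_i E^k_G\phi$ for all $k\geq1$'', i.e. ``$\snv\sat K_i E^k_G\phi$ for all $i\in G$ and all $k\geq0$'' (using $E^0_G\phi=\phi$). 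Since $E^{k+1}_G\phi=\bigwedge_{i\in G}K_i E^k_G\phi$, this last condition is the same as ``$\snv\sat E^k_G\phi$ for all $k\geq1$'', which is $\snv\sat C_G\phi$; reading the chain of equivalences off gives C2.

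For RC1 I would show that the rule preserves validity over SNMs. Its side condition ``provable from no assumptions'' gives $\der \phi\implies E_G(\psi\wedge\phi)$, hence $\sat \phi\implies E_G(\psi\wedge\phi)$ over all SNMs --- via soundness of \kdfouraxioma\ together with C1, C2 and a routine induction on derivation length --- and from this I would derive $\sat \phi\implies C_G\psi$ as follows: fix an SNM $\snv$ with $\snv\sat\phi$ and prove $\snv\sat E^k_G(\psi\wedge\phi)$ for all $k\geq1$ by induction on $k$; then $\snv\sat E^k_G\psi$ for all $k$ follows by the distribution fact, i.e. $\snv\sat C_G\psi$. The base case $k=1$ is the premise instantiated at $\snv$. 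For the step I would use that $E^k_G$ is monotone (if $\der\alpha\implies\beta$ then $\der E^k_G\alpha\implies E^k_G\beta$, by R2 followed by A2), so the premise yields $\der E^k_G\phi\implies E^{k+1}_G(\psi\wedge\phi)$, hence $\snv\sat E^k_G\phi\implies E^{k+1}_G(\psi\wedge\phi)$ by soundness of the axioms involved (Lemma~\ref{lemma:ksoundnesspaper}); combining this with the conjunct $\snv\sat E^k_G\phi$ of the induction hypothesis gives $\snv\sat E^{k+1}_G(\psi\wedge\phi)$.

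The step I expect to be the genuine obstacle is conceptual rather than computational: making precise the semantics of the nested $C_G$ in C2 (and of $K_i$ applied to it), which Table~\ref{tab:satkbl} does not cover, since it defines $\snv\sat K_i\phi$ only for $\phi\in\wkbl$. Once $C_G\psi$ is read as $\bigwedge_k E^k_G\psi$ and distribution of SNM-knowledge over conjunction is available, the remainder is the same quantifier-swapping and $E^k_G$-monotonicity bookkeeping used in the Kripke case.
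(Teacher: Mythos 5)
Your proof is correct and follows what is essentially the only natural route here: C1 is definitional, C2 falls out of unfolding the iterated-$E^k_G$ semantics against the fixed-point form, and RC1 is validity preservation via induction on $k$ using the \emph{syntactic} side condition together with $E_G$-monotonicity and soundness of \kdfouraxioma\ over SNMs (the right move, since semantic monotonicity of $K_i$ is not available when $K_i$ is interpreted by derivability). You also correctly flag the one point needing care --- fixing a reading of $\kbi \der C_G\psi$ as ``$\kbi \der E^k_G\psi$ for all $k$'', which Table~\ref{tab:satkbl} leaves implicit --- and your distribution-over-conjunction lemma discharges the rest.
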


\subsubsection*{Distributed Knowledge}
\label{sec:distributed}

In this section we introduce the distributed knowledge operator,
represented by the modality $D_G$. A fact becomes distributed
knowledge in the group of agents $G$ when it is known by combining the
knowledge of all individual agents. It can be seen as a wise agent. In
Kripke models, distributed knowledge is defined by removing possible
states, i.e., removing uncertainty. Formally,
$(M, s) \sat D_G \phi \mbox{ iff } (M, t) \sat \phi \mbox{ for all $t$ such that}~(s,t) \in
\bigcap_{i \in G}\mathcal{K}_i.$
We define distributed
knowledge as the union of all the explicit knowledge that all the
agents in $G$ have and everything that can be derived from it.


\begin{definition}[Distributed knowledge]
    Given an SNM $\snv$, a formula $\phi \in \wkbl$ and a set of
    agents $G$, distributed knowledge is defined as follows:
    $\snv \sat D_G \phi \mbox{ iff }  \bigcup_{i \in G} \kb_i \der \phi.$
\end{definition}


The following axioms characterise distributed knowledge \cite{FHM+95rk}:

\begin{compactitem}[]
\item D1. $D_{\{i\}} \phi \Longleftrightarrow K_i \phi,~i = 1, \ldots, n$,
\item D2. $D_G \phi \implies D_{G'} (\phi)$ if $G \subseteq G'$,
\item DA2 and DA4. Axioms A2 and A4 of \kdfouraxioma, $K_i$ with $D_G$ in each axiom.
\end{compactitem}

Note that axiom D is not required because we work with a belief axiomatisation~\cite{FHM+95rk}.
Therefore, it is possible for a group of agents to have inconsistent distributed beliefs.
In what follows, we show that this axiomatisation for Kripke models is sound with respect to SNMs as well.

\begin{lemma}
  \label{lemma:distributedholdspaper}
  Axioms D1 and D2, together with the axioms A2 and A4 of the
  \kdfouraxioma-axiomatisation (replacing the modality $K_i$ with
  the modality $D_G$) are sound w.r.t.~SNMs.
\end{lemma}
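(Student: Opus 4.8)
The plan is to unfold, for an arbitrary SNM $\snv = \langle \ag, \struct, \kb, \poval\rangle$, the satisfaction of each $D_G$\nobreakdash-formula into a statement about the \emph{pooled knowledge base} $\kb_G := \bigcup_{i \in G}\kb_i$ and the derivation relation $\der$, and then to discharge each axiom by a structural property of $\der$. By definition $\snv \sat D_G\phi$ iff $\kb_G \der \phi$, exactly mirroring $\snv \sat K_i\phi$ iff $\kb_i \der \phi$; so $D_G$ behaves like the knowledge modality of a single ``wise'' agent whose knowledge base is $\kb_G$, and the work is to check which closure properties of $\der$ survive this identification. Note in passing that axiom D is deliberately absent: $\kb_G$ need not be consistent even though each $\kb_i$ is (Def.~\ref{def:kb-consistent}), so only the belief-style axioms are at stake, and I would flag at the outset that none of the steps below silently requires consistency of $\kb_G$.

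First I would dispatch the two easy cases. For D1, when $G = \{i\}$ we have $\kb_{\{i\}} = \kb_i$, hence $\snv \sat D_{\{i\}}\phi$ iff $\kb_i \der \phi$ iff $\snv \sat K_i\phi$, giving both directions. For D2, if $G \subseteq G'$ then $\kb_G \subseteq \kb_{G'}$, and since $\der$ is the deduction relation of a Hilbert-style system it is monotone in its premise set (if $\Gamma \der \phi$ and $\Gamma \subseteq \Gamma'$ then $\Gamma' \der \phi$, because every step of a derivation from $\Gamma$ is still legal from $\Gamma'$, the side condition on R2 being independent of the premises); hence $\snv \sat D_G\phi$ implies $\snv \sat D_{G'}\phi$. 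For DA2, the $K$-axiom transposed to $D_G$, assume $\snv \sat D_G\phi \wedge D_G(\phi \implies \psi)$; then $\kb_G \der \phi$ and $\kb_G \der (\phi\implies\psi)$, and closure of $\der$ under modus ponens (rule R1 of \kaxioma, included in \kdfouraxioma) yields $\kb_G \der \psi$, i.e.\ $\snv \sat D_G\psi$.

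The remaining axiom, DA4 ($D_G\phi \implies D_G D_G\phi$), is the one I expect to be the main obstacle. It reduces to showing $\kb_G \der \phi$ implies $\kb_G \der D_G\phi$, that is, a self-awareness property for the \emph{pooled} knowledge base. This does \emph{not} follow from the individual self-awareness assumption (Def.~\ref{def:self-awareness}) alone, since $\kb_G \der \phi$ gives no single $i \in G$ with $\kb_i \der \phi$, so the argument of Lemma~\ref{lemma:a4a5paper} cannot be applied verbatim. The way I would resolve this is to make the extension of $\der$ to the $D_G$\nobreakdash-modality explicit: the natural and intended reading of iterated distributed knowledge is that the wise agent is self-aware of its pooled knowledge, i.e.\ $\Gamma \der D_G\phi$ iff $\Gamma \der \phi$ whenever $\Gamma$ is the union of the $\kb_i$ indexed by $G$. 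Under this reading DA4 is immediate ($\kb_G \der \phi$ iff $\kb_G \der D_G\phi$ iff $\kb_G \der D_G D_G\phi$), and one checks that this extension is harmless: it is consistent with the semantics ($\snv \sat D_G\phi$ iff $\snv \sat D_G D_G\phi$), it does not disturb the verifications of D1, D2 and DA2 above, and it needs no consistency hypothesis. The genuine effort in a full write-up is therefore to state this closure precisely, argue it is the right internalisation of $D_G$, and then read off DA4 while re-running the monotonicity and modus-ponens arguments so that all four items are justified uniformly.
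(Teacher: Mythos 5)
Your handling of D1, D2 and DA2 is correct and is essentially the only way to argue them: D1 is the identity $\bigcup_{i\in\{i\}}\kb_i = \kb_i$, D2 is monotonicity of $\der$ in its premise set, and DA2 is closure of $\der$ under modus ponens (R1); these go through exactly as you say, and your observation that no step requires consistency of the pooled base is apt. You have also put your finger on the one genuinely delicate point, DA4, and correctly observed that the individual self-awareness assumption (Def.~\ref{def:self-awareness}) does not transfer to the pooled base by the argument used for A4, since $\bigcup_{i\in G}\kb_i \der \phi$ need not yield any single $i\in G$ with $\kb_i\der\phi$.

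Where I would push back is on your resolution of DA4. Stipulating that $\Gamma\der D_G\phi$ iff $\Gamma\der\phi$ for $\Gamma=\bigcup_{i\in G}\kb_i$ makes DA4 true by definition rather than proving it sound for the framework as given; it amends $\der$ in exactly the way needed for the conclusion. Some extension of $\der$ to $D_G$-formulas is indeed unavoidable (otherwise nested occurrences of $D_G$ are not even in the proof system's language), but the minimal and non-question-begging extension is to admit D1, D2 and DA2 as axiom schemes of the derivation system; pooled self-awareness is then \emph{derivable} from the paper's stated assumptions. Argue by induction on a derivation $\phi_1,\dots,\phi_n=\phi$ from $\kb_G := \bigcup_{i\in G}\kb_i$: if $\phi_k\in\kb_i$ for some $i\in G$, then $\kb_i\der K_i\phi_k$ by Def.~\ref{def:self-awareness}, hence $\kb_G\der D_{\{i\}}\phi_k$ by D1 and $\kb_G\der D_G\phi_k$ by D2; if $\phi_k$ is an axiom instance or is obtained by R2, it is provable from no assumptions, so R2 followed by D1 and D2 gives $\kb_G\der D_G\phi_k$; if $\phi_k$ follows by R1 from $\phi_j$ and $\phi_j\implies\phi_k$, the induction hypothesis and DA2 give $\kb_G\der D_G\phi_k$. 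This yields that $\kb_G\der\phi$ implies $\kb_G\der D_G\phi$, hence DA4, without adopting a closure principle that already encodes the axiom being verified.
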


\section{Translation of SNMs into Kripke Models}
\label{sec:canonical}

In this section, we show that SNMs can be encoded into Kripke models. Our proof is constructive, starting from an SNM we give a procedure to build a {\it canonical} Kripke model, and we prove that satisfaction is preserved when interpreting \kbl~formulae as epistemic logic formulae.



For epistemic logic, Fagin \etal show that it is possible to construct
a canonical Kripke model which satisfies a given formula
$\phi$~\cite{FHM+95rk}, provided that $\phi$ is consistent with
respect to some of the axiomatisations of knowledge. A formula $\phi$
is \emph{\kdfouraxioma-consistent} if $\neg \phi$ cannot be
derived. A set of formulae is \kdfouraxioma-consistent if the
conjunction of all the formulae in the set is
\kdfouraxioma-consistent. We say that a set of formulae $\Phi$ is
\emph{maximal \kdfouraxioma-consistent} with respect to the
language \lang, if $\Phi$ is \kdfouraxioma-consistent and for all
$\phi$ in \lang~but not in $\Phi$, the set $\Phi \cup \{\phi\}$ is not
\kdfouraxioma-consistent. In what follows, we describe the
procedure of how to construct a canonical Kripke model for a
\kdfouraxioma-consistent formula.  We will follow a similar
approach when translating SNMs into Kripke models.

\begin{definition}[Canonical Kripke model for \kdfouraxioma \cite{FHM+95rk}]
  Consider a {\em \kdfouraxioma-consistent} formula $\phi$.
  Let $\sub(\phi)$ be the set of all subformulae of $\phi$.
  We define $\subplus(\phi)$ to be the set of all subformulae and their negations, i.e. $\subplus(\phi) = \sub(\phi) \cup \{ \neg \psi ~ | ~ \psi \in \sub(\phi) \}$.
  We also define $\con(\phi)$ to be the set of maximal \kdfouraxioma-consistent subsets of
  $\subplus(\phi)$
  . Given a set of formulae $\Theta \subseteq \lang$, we define $\Theta / K_i = \{\phi~|~K_i\phi \in
  \Theta\}$. The canonical Kripke model for $\phi$ is defined as follows:
  $ M_{\phi} = \langle S_\phi, \pi, \{\mathcal{K}_i\}_{i \in Ag}
  \rangle$ where
  $S_\phi = \{ s_{\Theta}~|~\Theta \in \con(\phi)  \}$,
  $\accrelation_i = \{ (s_{\Theta}, s_{\Psi})~|~\Theta / K_i \subseteq \Psi / K_i,~\Theta / K_i \subseteq \Psi \}$ and

  \noindent$\pi(s_{\Theta})(p(t_1,\ldots,t_k)) =
    \begin{cases}
      ~\textbf{true} & \mbox{if }~p(t_1,\ldots,t_k) \in \Theta\\
      ~\textbf{false} & \mbox{if }~p(t_1,\ldots,t_k) \not \in \Theta
    \end{cases}
    $
  \label{def:canonical}
\end{definition}

Fagin \etal show that $\phi$ is satisfiable in the resulting canonical Kripke model~\cite[Theorem 3.2.4]{FHM+95rk}.
The set of Kripke models that are sound and complete with respect to \kdfouraxioma~are the ones with a serial and transitive accessibility relation.
The accessibility relation of the previous canonical Kripke model is, as shown in~\cite[Theorem 3.2.4]{FHM+95rk}, serial and transitive. We denote the set of Kripke models with the previous type of accessibility relation as \mlt.

The canonical Kripke model will have at most $2^{|\phi|}$ states, as shown in~\cite[Theorem 3.2.4]{FHM+95rk} where $|\phi|$ is the length of the formula $\phi$. Even though it is finite, this approach of constructing a Kripke model can lead to an exponential growth of the size of the model. For example, if we assume that the knowledge of the agents increases monotonically, i.e., agents do not forget any knowledge they have previously obtained, then the size of $\phi$ will have a lower bound, from which its size will only grow, and consequently, the size of the corresponding canonical Kripke model. In what follows, we define a function which takes an SNM and converts it into the corresponding canonical Kripke model.

First we describe how to construct a set containing all the true formulae in an SNM, called the characteristic set of the social network.

\begin{definition}[]
  \label{def:characteristicset}
  The \emph{characteristic set} of an SNM \snv, denoted as
  $\Phisnv$, is constructed as follows: $\Phisnv = \{p(\vv{t}) ~ | ~ p(\vv{t}) \in \kb_e \} ~ \cup ~ \{K_i \phi ~ | ~ \phi \in \kb_i \} $ $~ \cup ~ $ $\{c(i,j) ~ | ~ (i,j) \in C_c, c \in \conSet \} ~ $ $\cup ~ \{a(i,j) ~ | ~ (i,j) \in A_a, a \in \actSet \}.$
\end{definition}

\noindent
Moreover, we define the \emph{characteristic formula} of an SNM.

\begin{definition}[]
  \label{def:characteristicformula}
  Given a characteristic set, \Phisnv, of an SNM \snv, its
  \emph{characteristic formula}, denoted as $\phisnv$, is defined as
  $\phisnv = \bigwedge_{\psi \in \Phisnv} \psi$.
\end{definition}
We will use the characteristic formula of an SNM to create the
corresponding Kripke model, therefore we must show that this formula
is \kdfouraxioma-consistent.

\begin{lemma}
  \label{lemma:phisnvconsistentpaper}
  For all $\snv \in \sn$, \phisnv~is \kdfouraxioma-consistent.
\end{lemma}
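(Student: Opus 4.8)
The plan is to show that $\neg\phisnv$ cannot be derived (this is exactly \kdfouraxioma-consistency). Since $\phisnv = \bigwedge_{\psi \in \Phisnv}\psi$ and $\Phisnv$ is finite (all knowledge bases, the connection/action relations, and the domain are finite by Assumption~\ref{asp:finitedomains}), it suffices to show that the finite set $\Phisnv$ is \kdfouraxioma-consistent, i.e.\ that one cannot derive $\bot$ from $\Phisnv$. I would argue this by assuming, for contradiction, that $\Phisnv \der \bot$ and deriving a contradiction with Knowledge Consistency (Def.~\ref{def:kb-consistent}).

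First I would observe that the ``non-epistemic'' part of $\Phisnv$ plays no role in any inconsistency. The formulae of the form $c(i,j)$ and $a(i,j)$ are atomic predicates about the social graph that do not occur inside any knowledge base and are not related by any axiom of \kdfouraxioma\ to the epistemic formulae; likewise the bare $p(\vv t)$ coming from $\kb_e$. Since \kdfouraxioma\ over this language has no axioms forcing interaction between distinct atomic predicates, a set consisting of atomic formulae together with a consistent set of epistemic formulae is consistent iff the epistemic part is. So the heart of the matter is the set $\{K_i\phi \mid \phi \in \kb_i\}$ (taken over all agents $i$). The key step is then: if $\bigcup_i \{K_i\phi \mid \phi\in\kb_i\}$ were inconsistent, one could ``peel off'' the $K_i$ modalities. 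Concretely, I would use the fact that in \kdfouraxioma\ (indeed already in \kaxioma) the modality $K_i$ distributes over derivability: if $\Gamma \der \psi$ then $\{K_i\gamma \mid \gamma\in\Gamma\} \der K_i\psi$ (this is the standard ``necessitation + axiom A2'' argument). Moreover, because agents are distinct, a derivation of $\bot$ from $\{K_i\phi_i\}_i$ must, by an argument on the modal depth / by considering the relevant sub-derivations for each agent, reduce to deriving $\bot$ from $\{K_i\phi \mid \phi\in\kb_i\}$ for a single agent $i$ — i.e.\ from $K_i(\bigwedge_{\phi\in\kb_i}\phi)$ after conjoining. Since $K_i \chi$ is consistent whenever $\chi$ is (seriality/axiom D applied to the canonical-style argument, or simply: from $K_i\chi \der \bot$ one gets $\der \neg K_i\chi$, and in a serial belief logic $\der K_i\chi$ would then fail, forcing $\chi$ itself to be inconsistent), we conclude $\kb_i$ itself must be inconsistent, i.e.\ $\kb_i \der \bot$, i.e.\ $\kb_i \der \phi$ and $\kb_i \der \neg\phi$ for some $\phi$ — contradicting Def.~\ref{def:kb-consistent}.

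I expect the main obstacle to be making rigorous the reduction from ``$\{K_i\phi\mid i\in\ag,\ \phi\in\kb_i\}$ is inconsistent'' to ``some single $\kb_i$ is inconsistent'': one must argue that a proof of $\bot$ cannot exploit cross-agent interaction. The cleanest way I would do this is semantically rather than proof-theoretically — build a single Kripke model witnessing consistency directly (a ``tree-shaped'' model: a root world satisfying all the atomic facts and having, for each agent $i$, an $\accrelation_i$-successor that itself models $\kb_i$ consistently, which exists by the consistency of each $\kb_i$ and soundness/completeness of \kdfouraxioma\ for serial transitive frames), then close $\accrelation_i$ transitively and add reflexive-free seriality as needed, and verify the root satisfies $\phisnv$. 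By soundness of \kdfouraxioma, satisfiability implies consistency. This also dovetails with Section~\ref{sec:canonical}'s later use of the characteristic formula to build the canonical model. A subtlety to handle carefully is that $\kb_i$ may contain nested modalities $K_j\psi$ with $j\neq i$; the self-awareness assumption (Def.~\ref{def:self-awareness}) and Knowledge Consistency are stated for $\der$ over $\wkbl$, so the single-agent consistency claim $\kb_i\not\der\bot$ is exactly Def.~\ref{def:kb-consistent} and already covers this case, but one must make sure the tree construction respects those nested modalities, which is routine since each $\kb_i$ is itself \kdfouraxioma-consistent and hence has its own model to graft in.
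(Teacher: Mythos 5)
Your overall strategy (reduce to consistency of the finite set $\Phisnv$, observe that the positive atoms $p(\vv{t})$, $c(i,j)$, $a(i,j)$ cannot contribute to an inconsistency in the absence of axiom T, and handle the block $\{K_i\phi \mid \phi \in \kbi\}$ agent by agent, preferably by exhibiting a witnessing serial transitive model) is the right one, but the step carrying the single-agent case is wrong. The claim that ``$K_i\chi$ is consistent whenever $\chi$ is'' is \emph{false} in \kdfouraxioma. Take $\chi = p \wedge K_i\neg p$, which is \kdfouraxioma-consistent (there is no axiom T). In any serial transitive model of $K_i\chi$ at a world $s$, seriality gives an $\accrelation_i$-successor $t$ of $s$ with $t \sat p \wedge K_i\neg p$; seriality again gives a successor $u$ of $t$ with $u \sat \neg p$; and transitivity makes $u$ a successor of $s$, forcing $u \sat p$ as well --- a contradiction. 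So $K_i\chi$ is unsatisfiable over the frames for which \kdfouraxioma\ is complete, hence inconsistent, even though $\chi$ is consistent. The same problem resurfaces in your semantic fallback: after grafting a model of $\kbi$ at $t_i$ and closing $\accrelation_i$ transitively, the root's $\accrelation_i$-successors include all of $t_i$'s own successors, and Knowledge Consistency alone does not guarantee that those satisfy every $\phi \in \kbi$.

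The missing ingredient is Self-Awareness (Def.~\ref{def:self-awareness}), which you mention only in passing and whose role you misidentify: it is not true that the single-agent consistency claim ``is exactly Def.~\ref{def:kb-consistent} and already covers this case''. Self-Awareness gives $\kbi \der K_i\phi$ for every $\phi \in \kbi$, hence (by soundness of $\der$) the grafted world $t_i$ satisfies $K_i\phi$ in addition to $\phi$, so \emph{all} $\accrelation_i$-successors of the root --- namely $t_i$ and, via transitivity, $t_i$'s successors --- satisfy $\phi$, and only then does the root satisfy $K_i\phi$. It is also Self-Awareness that excludes knowledge bases such as $\{p, K_i\neg p\}$ above: it forces $\kbi \der K_i p$, which together with $K_i\neg p$ and axiom D contradicts Def.~\ref{def:kb-consistent}. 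Since your written argument uses only Knowledge Consistency in its essential steps, it would equally ``prove'' the lemma for SNMs violating Self-Awareness, where the statement is false; so the argument cannot be complete as it stands. The remaining pieces (finiteness of $\Phisnv$, inertness of the atomic part, the agent-by-agent decomposition) are fine.
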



\noindent
We are now ready to provide our translation from SNMs into
canonical Kripke models.

\begin{definition}[Kripke transformation function]
Let $\kt: \sn \rightarrow \mlt$ be a function which takes an SNM and
converts it to the corresponding Kripke model as follows. Given an
$\snv \in \sn$, $\kt(\snv)$ is defined as follows:
\begin{inparaenum}[1)]
\item Construct $\Phisnv$ as defined in Def.~\ref{def:characteristicset};
\item Construct $\phisnv$ as defined in Def.~\ref{def:characteristicformula};
\item Return the resulting canonical Kripke model of $\phisnv$ as defined in Def.~\ref{def:canonical}.
\end{inparaenum}
\label{def:ktfunction}
\end{definition}

\noindent
We thus have our main theorem.

\begin{theorem}
  \label{thm:sntokripkepaper}
  If a formula $\phi$ is satisfied in an SNM $\snv$ then $\phi$ is
  satisfied in the Kripke model $\kt(\snv)$.
\end{theorem}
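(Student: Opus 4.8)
The plan is to prove the stronger statement that, for a suitably chosen ``reference'' state $s^{\star}$ of $\kt(\snv)$, we have $(\kt(\snv),s^{\star}) \sat \phi$ if and only if $\snv \sat \phi$; the theorem is then the left‑to‑right direction. By Lemma~\ref{lemma:phisnvconsistentpaper}, $\phisnv$ is \kdfouraxioma-consistent, so $\kt(\snv) = M_{\phisnv}$ is well defined and the canonical‑model construction of Fagin~\etal \cite[Theorem~3.2.4]{FHM+95rk} provides the Truth Lemma: for every $\Theta \in \con(\phisnv)$ and every $\chi \in \subplus(\phisnv)$, $(M_{\phisnv},s_{\Theta}) \sat \chi$ iff $\chi \in \Theta$ (reading $c_m(i,j)$ and $a_n(i,j)$ as ordinary atomic predicates in the valuation). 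Before the induction I would, exactly as in the proof of Theorem~\ref{thm:mcdecidable} and using Assumptions~\ref{asp:finitedomains} and~\ref{asp:computablefunctions}, replace $\phi$ by an equivalent quantifier‑, function‑ and constant‑free formula; this is legitimate over $\kt(\snv)$ too, provided the relational structures $\pi(s)$ of the canonical model carry the same finite domain as $\snv$, which I assume.

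For the reference state I take $\Theta^{\star} := \{\chi \in \subplus(\phisnv) \mid \snv \sat \chi\}$ and $s^{\star} := s_{\Theta^{\star}}$. This $\Theta^{\star}$ is \kdfouraxioma-consistent because $\snv$ itself satisfies every formula in it, so if $\der \neg\bigwedge\Theta^{\star}$ then by soundness of \kdfouraxioma\ over SNMs (Theorem~\ref{thm:kd45holdspaper}) we would obtain $\snv \sat \neg\bigwedge\Theta^{\star}$, a contradiction; and it is maximal in $\subplus(\phisnv)$ since it decides every subformula of $\phisnv$ in agreement with $\snv$. Note $\Phisnv \subseteq \Theta^{\star}$, since every conjunct of $\phisnv$ holds in $\snv$, so $s^{\star}$ is indeed ``the state associated with the characteristic formula''. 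Two consequences of the definitions of $\Phisnv$ and of $\mathcal{K}_i$ that I would record: (i) $\kbi \subseteq \Theta^{\star}/K_i \subseteq \{\psi \mid \kbi \der \psi\}$, so $\Theta^{\star}/K_i$ and $\kbi$ have the same \kdfouraxioma-deductive closure; and (ii) every $s_{\Psi}$ with $(s^{\star},s_{\Psi}) \in \mathcal{K}_i$ satisfies $\kbi \subseteq \Psi$.

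The induction on the quantifier‑free $\phi$ is then routine except for the knowledge case. For an atom $\alpha$, $(\kt(\snv),s^{\star}) \sat \alpha$ iff $\alpha \in \Theta^{\star}$: if $\alpha \in \subplus(\phisnv)$ this is $\snv \sat \alpha$ by the choice of $\Theta^{\star}$, and if $\alpha \notin \subplus(\phisnv)$ then $\alpha$ is false at $s^{\star}$ while also $\snv \not\sat \alpha$ (an $\alpha$ belonging to $\kb_e$, or a connection or action pair, would be a conjunct of $\phisnv$). The Boolean cases are immediate from the induction hypothesis. For $K_i\psi$ with $K_i\psi \in \subplus(\phisnv)$ the Truth Lemma gives $(\kt(\snv),s^{\star}) \sat K_i\psi$ iff $K_i\psi \in \Theta^{\star}$ iff $\snv \sat K_i\psi$. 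Otherwise I argue at the Kripke level: by (ii) every $\mathcal{K}_i$-successor $s_{\Psi}$ of $s^{\star}$ has $\kbi \subseteq \Psi$, so if $\kbi \der \psi$ then $\Psi \der \psi$, which should force $(\kt(\snv),s_{\Psi}) \sat \psi$; and if $\kbi \not\der \psi$ then, since $\Theta^{\star}/K_i$ has the same closure as $\kbi$, a Lindenbaum‑style extension yields a $\mathcal{K}_i$-successor $s_{\Psi}$ of $s^{\star}$ with $(\kt(\snv),s_{\Psi}) \sat \neg\psi$, whence $(\kt(\snv),s^{\star}) \not\sat K_i\psi$.

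The hard part is exactly the case $K_i\psi \notin \subplus(\phisnv)$: the query $\phi$ need not be a subformula of the characteristic formula, so the Truth Lemma of Fagin~\etal does not directly cover $\psi$ (nor the nested modalities inside it) at the successor states, and the induction cannot be closed purely at $s^{\star}$. I expect the real work to be an extended truth lemma stating that at every state $s_{\Psi}$ of $\kt(\snv)$ the ``closed‑world completion'' of $\Psi$ (adding $\neg\alpha$ for every relevant atom $\alpha \notin \Psi$) decides each quantifier‑free formula in agreement with its \kdfouraxioma-derivability from $\Psi$; establishing this, especially the $K_j$ clause (which needs the precise shape of $\mathcal{K}_j$ together with seriality and transitivity/positive introspection) and the bookkeeping for atoms outside $\subplus(\phisnv)$, is where the argument becomes genuinely technical. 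Throughout, the recurring tool is Theorem~\ref{thm:kd45holdspaper}, which turns ``$\snv$ (or some auxiliary SNM) satisfies it'' into ``it is \kdfouraxioma-consistent'', together with the consistency of the $\kbi$ (Definition~\ref{def:kb-consistent}) needed to run the Lindenbaum extensions.
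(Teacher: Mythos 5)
Your setup --- taking the reference state $s_{\Theta^{\star}}$ for $\Theta^{\star}=\{\chi\in\subplus(\phisnv)\mid \snv\sat\chi\}$, observing $\Phisnv\subseteq\Theta^{\star}$ and $\kbi\subseteq\Theta^{\star}/K_i\subseteq\Psi$ for every $\mathcal{K}_i$-successor $s_{\Psi}$, and discharging atoms, Boolean connectives and the subformulae of $\phisnv$ via the truth lemma of Fagin~\etal --- is the right skeleton for this theorem. The positive half of the knowledge case also goes through as you suggest: local soundness of \kdfouraxioma\ over serial transitive models turns $\kbi\der\psi$ together with $\kbi\subseteq\Psi$ into $(\kt(\snv),s_{\Psi})\sat\psi$ at every successor.

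However, the gap you flag at $K_i\psi\notin\subplus(\phisnv)$ is not mere bookkeeping, and the ``extended truth lemma'' you hope to prove is false as you state it, so the Lindenbaum step cannot be carried out inside $\kt(\snv)$. The states of the canonical model are subsets of the \emph{fixed} finite set $\subplus(\phisnv)$; they cannot be extended to decide atoms or modal formulae that do not occur in $\phisnv$, and the closed-world valuation forces every such atom to be false at every state. Concretely, take a ground atom $q$ not occurring anywhere in $\snv$, so that $q\notin\kb_e$ and, by consistency of $\kbi$, also $\kbi\not\der\neg q$. Then $\snv\sat\neg K_i\neg q$, but in $\kt(\snv)$ the atom $q$ is false at every state, hence $\neg q$ holds at every $\mathcal{K}_i$-successor of every state, and by seriality of the canonical accessibility relation $K_i\neg q$ holds everywhere; thus $\neg K_i\neg q$ is satisfied at no state of $\kt(\snv)$. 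So the direction ``if $\kbi\not\der\psi$ then some successor refutes $\psi$'' --- which your induction needs because $\phi$ may contain negated knowledge --- genuinely fails. To close the argument you must either restrict $\phi$ (e.g.\ to Boolean and modal combinations of formulae already in $\subplus(\phisnv)$, where your truth-lemma argument is complete), or make the construction query-dependent, building the canonical model over $\subplus(\phisnv\wedge\phi)$ so that the maximal consistent sets are free to realise $\neg K_i\psi$ by a successor containing $\neg\psi$; with that modification your Lindenbaum step becomes available and the induction closes.
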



\subsection{Translation of Kripke Models into SNMs}
\label{sec:kripketosnm}



Note that, in general, it is not possible to translate arbitrary Kripke models into SNMs. One of the reasons is that in Kripke models there exists only one type of predicate, which is always interpreted in the same way, whereas in SNMs, there are three types of predicates. We cannot even translate back canonical Kripke models constructed using \kt. To see why, let us consider a canonical Kripke model with the following characteristic set of formulae $\{K_\alice \loc(\bob,\library), \friend(\alice,\bob)\}$. We know that the predicate $\loc(\bob,\library)$ belongs to Alice's knowledge base, since it is under the scope of a knowledge modality. However, we cannot know the type of the predicate $\friend(\alice,\bob)$, it could be part of a connection relation, action relation or simply be a regular predicate which should appear in the environment's knowledge base.

That said, we show here that it is in fact always possible to reconstruct the original SNM from the canonical Kripke model, if we slightly modify our translation function \kt.
Let \mPhisnv~be a \emph{marked characteristic set}, which is a characteristic set as defined in Def.~\ref{def:characteristicset}, but having the predicates annotated so that their type can be syntactically identified. For example, if the predicate above $\friend(\alice,\bob)$ is a connection predicate, it would be converted to $\co\_\friend(\alice,\bob)$. We can now define \ktm~to be a Kripke transformation function as in Def.~\ref{def:ktfunction}, except for the input characteristic set, which is replaced by \mPhisnv.
Given that we can uniquely identify the type of the predicates it is trivial to define a function that takes a Kripke model constructed using \ktm~and returns the equivalent SNM. The function proceeds as follows: firstly, it searches for all the agents present in all formulae and subformulae in $\mPhisnv$ and creates one node per agent; secondly, it puts regular predicates in the environment's knowledge base; thirdly it creates relations between agents for each connection and permission predicate; finally, for all formulae of the form $K_i \phi$ it includes $\phi$ in $i$'s knowledge base.
We refer the reader to the extended version of this paper for the formal definitions of \mPhisnv, \ktm~and the SNM construction.

We also show that satisfaction is preserved between a marked canonical Kripke model and its original SNM when formulae are evaluated in the state corresponding to the marked characteristic set ($s_{\mPhisnv}$).

\begin{theorem}
  \label{thm:kripketosnmpaper}
  If a formula $\phi$ is satisfied in the state $s_{\mPhisnv}$ of a
  Kripke model $\ktm(\snv)$ then $\phi$ is satisfied in the SNM
  $\snv$.
\end{theorem}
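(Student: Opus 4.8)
The plan is to proceed by structural induction on $\phi$, showing that for every subformula $\psi$ of $\phi$, we have $(\ktm(\snv), s_{\mPhisnv}) \sat \psi$ iff $\snv \sat \psi$. The key observation that drives the whole argument is that the marked characteristic set $\mPhisnv$ is, by construction, a maximal \kdfouraxioma-consistent subset of $\subplus(\mphisnv)$ (where $\mphisnv$ is the marked characteristic formula), so $s_{\mPhisnv}$ is indeed a state of the canonical model $\ktm(\snv)$; consistency follows from Lemma~\ref{lemma:phisnvconsistentpaper} adapted to the marked setting. Then by the standard property of canonical models (as in \cite[Theorem 3.2.4]{FHM+95rk}), membership of a formula $\psi \in \subplus(\mphisnv)$ in the maximal consistent set $\mPhisnv$ coincides with truth of $\psi$ at $s_{\mPhisnv}$.

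First I would treat the atomic cases. For a regular predicate $p(\vv{t})$: by the SNM-reconstruction procedure, $p(\vv{t})$ is placed in $\kb_e$ exactly when $p(\vv{t}) \in \mPhisnv$ (after stripping the mark), and by the semantics of Table~\ref{tab:satkbl}, $\snv \sat p(\vv{t})$ iff $p(\vv{t}) \in \kb_e$; on the Kripke side, $\kripkeval(s_{\mPhisnv})(p(\vv{t})) = \textbf{true}$ iff $p(\vv{t}) \in \mPhisnv$ by Def.~\ref{def:canonical}. The connection and action predicate cases $c_m(i,j)$, $a_n(i,j)$ are identical, using the third and fourth clauses of Def.~\ref{def:characteristicset} together with the reconstruction step that builds the relations $C_m$ and $A_n$. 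The Boolean cases $\neg$ and $\wedge$ are routine from the induction hypothesis and agreement of the two satisfaction relations on negation and conjunction. For the quantifier $\forall x.\psi$, I would note that both semantics range over the same finite domain \dom\ (the reconstruction does not alter the domain) and appeal to the induction hypothesis applied to each instance $\psi[v/x]$ — strictly one should be careful that $\subplus$ is defined relative to the quantifier-free expansion, mirroring what is done for the model-checking algorithm in the proof of Theorem~\ref{thm:mcdecidable}.

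The main obstacle is the knowledge modality $K_i \psi$. On the SNM side, $\snv \sat K_i \psi$ iff $\kbi \der \psi$, and the reconstruction puts $\psi$ into $\kbi$ precisely when $K_i \psi \in \mPhisnv$. On the Kripke side, $(\ktm(\snv), s_{\mPhisnv}) \sat K_i \psi$ iff $\psi$ holds at every $s_\Psi$ with $(s_{\mPhisnv}, s_\Psi) \in \accrelation_i$. So I must show: $K_i \psi \in \mPhisnv$ (equivalently, $\psi$ is \kdfouraxioma-derivable from $\kbi$, using Self-Awareness Def.~\ref{def:self-awareness} and the closure properties of $\mPhisnv$) iff $\psi$ is true at all $\accrelation_i$-successors of $s_{\mPhisnv}$. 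The forward direction is the canonical-model lemma together with the definition of $\accrelation_i$ (which forces $\mPhisnv/K_i \subseteq \Psi$). The backward direction is the delicate one: it requires that if $K_i\psi \notin \mPhisnv$ then there is an $\accrelation_i$-successor falsifying $\psi$; here I would invoke the existence lemma for canonical models in \cite{FHM+95rk} (the set $(\mPhisnv/K_i) \cup \{\neg\psi\}$ is \kdfouraxioma-consistent and extends to a maximal consistent subset of $\subplus$, which yields the required successor state), taking care that seriality and transitivity of $\accrelation_i$ — already established for \kt\ and inherited by \ktm\ — are what make the \kdfouraxioma\ reasoning go through. Once $K_i$ is handled, the derived modalities $S_G$, $E_G$ are immediate since they are defined as finite Boolean combinations of $K_i$'s, and the theorem follows by taking $\psi = \phi$.
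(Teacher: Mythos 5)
Your overall plan---a structural induction establishing that truth at $s_{\mPhisnv}$ coincides with satisfaction in $\snv$, with a truth-lemma argument for the modal case---is the natural one, but several load-bearing steps fail as stated. First, $\mPhisnv$ is \emph{not} a maximal \kdfouraxioma-consistent subset of $\subplus(\mphisnv)$: by Def.~\ref{def:characteristicset} it records only positive facts (the predicates of $\kb_e$, the formulae $K_i\psi$ with $\psi\in\kbi$, and the connection and action atoms), so it leaves undecided, e.g., the negations of predicates absent from $\kb_e$ and all the inner subformulae of the $K_i\psi$'s, which do belong to $\subplus(\mphisnv)$. Hence $s_{\mPhisnv}$ must be pinned down as one particular maximal consistent \emph{extension} of $\mPhisnv$, and every case of your induction has to be argued for that extension; your atomic and modal cases, which test membership in $\mPhisnv$ itself, do not go through unchanged. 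Second, in the $K_i$ case you treat ``$K_i\psi\in\mPhisnv$'' and ``$\kbi\der\psi$'' as equivalent. The former means $\psi\in\kbi$ (explicit knowledge), whereas the SNM clause for $K_i$ is derivability, and these differ: if $\kbi=\{p\wedge q\}$ then $\snv\sat K_i p$ although $K_i p\notin\mPhisnv$. Self-Awareness (Def.~\ref{def:self-awareness}) relates derivability of $\psi$ to derivability of $K_i\psi$, not to membership in the characteristic set, so the bridge between \kdfouraxioma-derivability from $\kbi$ and truth at all $\accrelation_i$-successors---the actual content of this case---is still missing.

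The deepest problem is that all the canonical-model tools you invoke (maximality, the truth lemma, the existence lemma) are relative to the \emph{finite} set $\subplus(\mphisnv)$, while the statement quantifies over an arbitrary $\phi\in\wkbl$, whose subformulae need not lie in $\subplus(\mphisnv)$. Your existence-lemma step---extending $(\mPhisnv/K_i)\cup\{\neg\psi\}$ to a maximal consistent subset of $\subplus(\mphisnv)$---is unavailable when $\neg\psi\notin\subplus(\mphisnv)$, and this is not cosmetic. Take an atom $r$ occurring nowhere in $\snv$ and $\phi=K_i\neg K_i r$: in $\ktm(\snv)$ the valuation of Def.~\ref{def:canonical} makes $r$ false at every state, so by seriality $\neg K_i r$ holds at every state and $K_i\neg K_i r$ holds at $s_{\mPhisnv}$; yet $\kbi\not\der\neg K_i r$ in \kdfouraxioma\ (there is no negative introspection), so $\snv\not\sat\phi$. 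Your induction must therefore break on this formula, and it breaks exactly at the missing successor state. A correct argument has to either restrict the formulae for which the correspondence is claimed (e.g., to $\subplus(\mphisnv)$, or to formulae over the vocabulary actually occurring in $\snv$) or handle the modal case by a method that does not rely on the finitary existence lemma; as written, your proposal does neither.
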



\newcommand\formulasize[1]{|#1|}
\newcommand\modelsize[1]{||#1||}

\section{Model checking complexity}
\label{sec:complexity}

In~\cite{FHM+95rk}, Fagin \etal 
prove that the complexity of the model checking problem for \kdfouraxioma~(without common and distributed knowledge) is PSPACE-complete for $n$ agents where $n > 1$ and NP-complete for one agent. They also prove that for a model $M = (S, \pi, \mathcal{K}_1, \ldots, \mathcal{K}_n)$ \emph{``There is an algorithm that, given a structure $M$, a state $s$ of $M$ and a formula $\phi \in \lang$, determines, in time $O(\modelsize{M} \times \formulasize{\phi})$, whether $(M, s) \sat \phi$''} (see \cite[Proposition 3.2.1]{FHM+95rk}) where $\modelsize{M}$ is the sum of all the states in $S$ and the number of pairs in all $\mathcal{K}_i$, and $\formulasize{\phi}$ is the length of the formula defined as usual. This algorithm is not optimal, but the result is useful to compare the model checking problem in SNMs and the Kripke models constructed using our translation. 


Let $M_{\phisnv}$ be the model $\kt(\snv)$ for an SNM \snv. The complexity of the model checking problem of a formula $\phi$ in the previous model is
$  O(\modelsize{M_{\phisnv}} \times \formulasize{\phi}).$
$M_{\phisnv}$ has size at most $2^{\formulasize{\phisnv}}$ (see Section~\ref{sec:canonical}), therefore it holds $\modelsize{M_{\phisnv}} \leq 2^{\formulasize{\phisnv}}$.
Thus, for simplicity and w.l.o.g.~the above may be rewritten as
$  O(2^{\formulasize{\phisnv}} \times \formulasize{\phi}).$

In what follows we study the complexity of the model checking problem in \kbl. The proof of Theorem~\ref{thm:mcdecidable} describes an algorithm to determine whether $\snv \sat \phi$. We consider \kbl~without common and distributed knowledge, since the complexity for Kripke models mentioned at the beginning of the section also excludes these modalities. For simplicity in the complexity analysis and w.l.o.g.~we only consider quantifier free formulae which do not contain functions.

Let $M_{\kbi}$ be the canonical Kripke model resulting from the conjunction of all formulae in agent's $i$ knowledge base using our translation, the complexity of the model checking problem is given by the function \emph{checking complexity} ($\cc$): $\cc(p(\vv{t})) = \cc(c(i,j)) = \cc(a(i,j) = c$, $\cc(\neg \phi) = 1 + \cc(\phi)$, $\cc(\phi_1 \wedge \phi_2) = 1 + \cc(\phi_1) + \cc(\phi_2)$ and $\cc(K_i \phi) = O(\modelsize{M_{\kbi}} \times \formulasize{\phi})$
%
where $c$ is an upper-bound in the cost of checking satisfaction of predicates in the environment's knowledge base, connection predicates and action predicates. Negation and conjunction need one step plus the complexity of checking satisfaction of their subformulae. Finally, satisfaction of $K_i \phi$ depends on checking $\kbi \der \phi$, which requires solving the model checking problem as defined for Kripke models. Therefore it has the same complexity.
Let $\outerk : \wkbl \rightarrow 2^{\wkbl}$ be a function that takes a \kbl~formula and returns the set of subformulae where $K_i$ is the top most operator and it is not under the scope of a knowledge modality. For example, $\outerk(K_a(p(s) \wedge K_b q(s)) \wedge p(u) \wedge \neg K_b r(s) \wedge K_c u(v) ) = \{K_a(p(s) \wedge K_b q(s)), K_b r(s), K_c u(v)\}$. Note that $K_b q(s)$ is not part of the set because it is under the scope of $K_a$. The complexity of checking whether a formula $\phi$ is satisfiable in an SNM is $O(\sum_{K_i \phi_i \in \outerk(\phi)} (\modelsize{M_{\kbi}} \times \formulasize{\phi_i}) + m_\phi)$
where $m_\phi \in \nat$. The characteristic formula of an agent's knowledge base is the conjunction of all its knowledge, which we denote as $\phi_{\kbi}$. As before, it holds that $\modelsize{M_{\kbi}} < \formulasize{2^{\phikbi}}$, which we use again for the complexity of the problem $O(\sum_{K_i \phi_i \in \outerk(\phi)} (2^{\formulasize{\phikbi}} \times \formulasize{\phi_i}) + m_\phi)$.

The intuition is as follows: $m_\phi$ is the cost of checking predicates, conjunctions and negations in $\phi$, which we assume to be some constant that depends on the length of $\phi$. Besides, $\sum_{K_i \phi_i \in \outerk(\phi)} $ $(2^{\formulasize{\phikbi}} \times \formulasize{\phi_i})$ is the cost of checking each subformula $\phi_i$ in the knowledge base of the corresponding agent. In short, we have replaced checking satisfaction of $\phi$ in a complete model of the social network to checking satisfaction of subformulae of $\phi$ in the corresponding knowledge bases of the agents.

Checking the parts of $\phi$ that only contain predicates and logical connectives has very similar complexity in both models. In the canonical Kripke model of an SNM \snv, the state corresponding to the characteristic set ($s_{\Phisnv}$) contains all true predicates (see Def.~\ref{def:canonical}). Similarly, in SNMs it is only needed to check the environment's knowledge base, and the connection and action relations (see Table~\ref{tab:satkbl}). In both cases the complexity is determined by the length of this particular part of $\phi$. Therefore, in order to compare the complexity of the model checking problem, we only focus on the parts of the formula that are under the scope of a knowledge modality. Given a formula $\phi$, let $\phi^K$ be the conjunction of the subformulae starting with a $K_i$ modality (for any $i \in \ag$), formally, $\phi^K \triangleq \bigwedge_{\psi \in \outerk(\phi)}\psi$. Thus the complexity of the model checking problem in Kripke models is reduced to $O(2^{\formulasize{\phisnv}} \times \formulasize{\phi^K})$, and in SNMs it is $O(\sum_{K_i \phi_i \in \outerk(\phi)} (2^{\formulasize{\phikbi}} \times \formulasize{\phi_i}))$.
To formally compare the complexity of the problem in both models we prove the following.

\begin{lemma}
  Given $\snv \in \sn$ and a formula $\phi$ the following holds:
$O(\sum_{K_i \phi_i \in \outerk(\phi)} (2^{\formulasize{\phikbi}} \times \formulasize{\phi_i})) < O(2^{\formulasize{\phisnv}} \times \formulasize{\phi^K}).$
  \label{lemma:complexity}
\end{lemma}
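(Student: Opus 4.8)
The plan is to compare the two complexity expressions term by term, exploiting the key structural fact that the characteristic formula $\phisnv$ of the whole SNM subsumes the characteristic formula $\phikbi$ of each individual knowledge base. First I would observe that, by Definition~\ref{def:characteristicset}, every conjunct $K_i\psi$ with $\psi \in \kb_i$ occurs inside $\Phisnv$; hence for every agent $i$ that actually appears as an outer knowledge modality in $\phi$, the formula $\phikbi$ is (up to reordering of conjuncts and the extra $K_i$ wrapper) a subformula of $\phisnv$, so $\formulasize{\phikbi} \leq \formulasize{\phisnv}$. This gives $2^{\formulasize{\phikbi}} \leq 2^{\formulasize{\phisnv}}$ for each $i$. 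Second, I would note that $\outerk(\phi)$ is precisely the set of conjuncts of $\phi^K$, so $\sum_{K_i\phi_i \in \outerk(\phi)} \formulasize{\phi_i} \leq \formulasize{\phi^K}$ (the sum of the lengths of the bodies is bounded by the length of their conjunction, which also carries the $K_i$ symbols and the connecting $\wedge$'s). Combining these two bounds, $\sum_{K_i\phi_i \in \outerk(\phi)} (2^{\formulasize{\phikbi}} \times \formulasize{\phi_i}) \leq 2^{\formulasize{\phisnv}} \times \sum_{K_i\phi_i \in \outerk(\phi)} \formulasize{\phi_i} \leq 2^{\formulasize{\phisnv}} \times \formulasize{\phi^K}$, which already yields the non-strict inequality.

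To obtain the \emph{strict} inequality claimed in the statement, I would argue that the inclusion $\phikbi \sqsubseteq \phisnv$ is in fact proper whenever the SNM is non-trivial: $\Phisnv$ contains, besides the conjuncts $K_i\psi$ for $\psi \in \kb_i$, also all the environment predicates from $\kb_e$, all connection atoms $c(i,j)$, all action atoms $a(i,j)$, and the knowledge conjuncts of the \emph{other} agents. So unless the social network consists of a single agent with an empty environment and no connections, $\formulasize{\phikbi} < \formulasize{\phisnv}$, and the gap is at least one symbol, making the exponential factor differ by at least a constant multiplicative factor of two. Likewise, since $\phi^K$ is a conjunction of the $K_i\phi_i$'s, it contains the extra $K_i$ operators and the $\wedge$ connectives, so $\sum \formulasize{\phi_i} < \formulasize{\phi^K}$ strictly as soon as $\outerk(\phi)$ is nonempty. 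Either of these strict drops suffices to separate the two big-$O$ classes.

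The main obstacle I anticipate is making the comparison of asymptotic classes rigorous rather than just comparing concrete numeric values: $O(f) < O(g)$ has to be read as ``$f$ is asymptotically strictly dominated by $g$'', i.e. $f \in o(g)$, and for that the dependence on the varying parameters must be pinned down — here the relevant growing quantity is the size of the knowledge bases (recall the remark in Section~\ref{sec:canonical} that under monotone knowledge accumulation $\phikbi$ grows without bound). The cleanest route is: fix the set of agents and the ``shape'' of $\phi$ (the pattern of outer modalities), let the knowledge-base contents grow, and show that $2^{\formulasize{\phisnv}}$ grows strictly faster than $\sum_i 2^{\formulasize{\phikbi}}$ because $\formulasize{\phisnv} \geq \max_i \formulasize{\phikbi} + \Omega(\sum_{j} \formulasize{\phikb{j}} \text{ over } j \neq i)$ once there is more than one agent with nonempty knowledge — an exponential of a sum dominates a sum of exponentials. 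I would present this as the core estimate, with the single-agent degenerate case handled separately (there the two expressions coincide up to constants and one appeals instead to the strictness coming from the $\phi^K$ versus $\sum\phi_i$ comparison, or simply notes the statement is understood for the interesting multi-agent regime, consistent with the PSPACE-vs-NP dichotomy quoted from~\cite{FHM+95rk} at the start of the section).
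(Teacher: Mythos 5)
Your proposal is correct and follows essentially the same route as the paper's own argument: each $\formulasize{\phikbi}$ is a strict summand of $\formulasize{\phisnv}$ (which also carries the other agents' knowledge, the $K_i$ wrappers, and the connection/action/environment atoms), each $\formulasize{\phi_i}$ is bounded by $\formulasize{\phi^K}$, and the exponential of the sum dominates the sum of exponentials. Your explicit discussion of how to read the strict inequality between $O$-classes is a welcome refinement of a point the statement itself leaves informal, but it does not change the substance of the proof.
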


The previous lemma shows that it is always more efficient to check satisfaction of a formula $\phi$ in SNMs. Intuitively, it shows that it is more efficient to construct Kripke models representing the agents' knowledge base and locally check the corresponding subformulae, than constructing the complete Kripke model to check the conjunction of the mentioned subformulae. The difference in complexity becomes more apparent as less agents are involved in the knowledge modalities of $\phi$. When an agent is not mentioned in $\phi$ her knowledge base is disregarded. For instance, in the SNM of Fig.~\ref{fig:snmexample} checking $K_\charlie \loc(\bob,\pub,1)$ requires (at most) $2^4 + 5 = 21$ steps where $4$ is the size of the formula in \charlie's knowledge base and $5$ is the size of $K_\charlie \loc(\bob,\pub,1)$, whereas in the corresponding canonical Kripke model it requires (at most) $2^{4+14+12}+5 = 1073741829$ steps where $14$ is the size of the conjunction of all the formulae in the knowledge base of $\alice$ (assuming that the domain of $x$ only has one element), and $12$ is the size of the predicates $friend(\alice,\bob)$, $friend(\bob,\alice)$, $\blocked(\bob,\charlie)$ and $\friendRequest(\charlie,\alice)$.


\section{Related work}
\label{sec:related}

The use epistemic logic to model knowledge in social networks is not new. One line of work consists in using two dimensional modal logic. It relies on Kripke models where the knowledge of the agents in the social network is encoded using an accessibility relation, and friendship is represented using a symmetric irreflexive relation between agents \cite{SLG13fel}. Other epistemic logics include a public (and private) announcement operator to study diffusion of information in the network \cite{RT11lkfsn, Christoff201548}. Permission and knowledge has also been merged in the so called deontic-epistemic logic \cite{ABT11dlpc}. For a detailed comparison among these logics and \kbl~we refer to the work by Pardo \& Schneider \cite{PS14fpp, PardoLic} and references therein.

There exist several model checkers for epistemic logic that perform efficiently in rather large scenarios \cite{GvdM04MCK,van2007demo,lomuscio2015mcmas}. However, as shown in this paper, model checking in the canonical Kripke model constructed from an SNM has higher complexity than in the SNM.

On the other hand, the model checking algorithm presented in this paper requires checking whether $\kbi \der \phi$. As mentioned in Section~\ref{sec:ppf}, this check can be resolved by using any of the existing model checkers or SAT solvers for epistemic logic. For this reason, any improvement in the efficiency of the model checking problem in Kripke models, will also be improve the performance when checking formulae in the individual knowledge bases of each agent. 
In addition, local checks in different knowledge bases can easily be parallelised. For instance, if there is one process per knowledge base, formulae regarding different agents' knowledge can be checked in parallel in the corresponding knowledge bases. To the best of our knowledge, there are no parallel model checkers for epistemic logic.


\section{Final Discussion}
\label{sec:conclusion}

We have proved that the model checking problem in SNMs is decidable.
We have shown the relation between SNMs and Kripke models.
Concretely, we have proven that the belief axiomatisation \kdfouraxioma, which was originally defined for epistemic logic and naturally models agents' reasoning, is sound w.r.t.~SNMs.
We have provided a translation of SNMs models into canonical Kripke models and proved that satisfaction of any formula in the SNM is preserved in the corresponding Kripke model.
We have also provided a translation from the canonical Kripke structure (obtained from our translation from SNMs) into the original SNM.
We have proven that all formulae are satisfied in the state corresponding to the characteristic set of the SNM in the Kripke model are also satisfied in the original SNM.
Finally, we showed the model checking problem in SNMs using our algorithm is more efficient than using the standard Kripke semantics.

We conjecture that arbitrary Kripke models (in the frame of models with serial and transitive relations) can be translated to SNMs.
However, to preserve satisfaction the translation would generate several SNMs from a given Kripke model.
Each of these SNMs would correspond to a state in the Kripke model.

The semantics of the privacy policy language \ppl~(included in \ppf) is given in terms of the satisfaction relation of \kbl, so \ppl~conformance is reduced to \kbl~satisfaction.
Thanks to our results we may check conformance of  \ppl~policies by using existing model checkers for epistemic logic.

\subsubsection*{Acknowledgements}
This research has been supported by: the Swedish funding agency SSF under
the grant {\em Data Driven Secure Business Intelligence} and the Swedish Research Council ({\it Vetenskapsr\aa det}) under grant Nr.~2015-04154 ({\em PolUser: Rich User-Controlled Privacy Policies}).

\bibliographystyle{eptcs}
\bibliography{references}

\begin{thebibliography}{10}
\providecommand{\bibitemdeclare}[2]{}
\providecommand{\surnamestart}{}
\providecommand{\surnameend}{}
\providecommand{\urlprefix}{Available at }
\providecommand{\url}[1]{\texttt{#1}}
\providecommand{\href}[2]{\texttt{#2}}
\providecommand{\urlalt}[2]{\href{#1}{#2}}
\providecommand{\doi}[1]{doi:\urlalt{http://dx.doi.org/#1}{#1}}
\providecommand{\bibinfo}[2]{#2}

\bibitemdeclare{article}{ABT11dlpc}
\bibitem{ABT11dlpc}
\bibinfo{author}{Guillaume \surnamestart Aucher\surnameend},
  \bibinfo{author}{Guido \surnamestart Boella\surnameend} \&
  \bibinfo{author}{Leendert \surnamestart van~der Torre\surnameend}
  (\bibinfo{year}{2011}): \emph{\bibinfo{title}{A dynamic logic for privacy
  compliance}}.
\newblock {\sl \bibinfo{journal}{Artificial Intelligence and Law}}
  \bibinfo{volume}{19}(\bibinfo{number}{2-3}), pp. \bibinfo{pages}{187--231},
  \doi{10.1007/s10506-011-9114-3}.

\bibitemdeclare{article}{Christoff201548}
\bibitem{Christoff201548}
\bibinfo{author}{Zo\'e \surnamestart Christoff\surnameend} \&
  \bibinfo{author}{Jens~Ulrik \surnamestart Hansen\surnameend}
  (\bibinfo{year}{2015}): \emph{\bibinfo{title}{A logic for diffusion in social
  networks}}.
\newblock {\sl \bibinfo{journal}{Journal of Applied Logic}}
  \bibinfo{volume}{13}, pp. \bibinfo{pages}{48 -- 77},
  \doi{10.1016/j.jal.2014.11.011}.

\bibitemdeclare{techreport}{van2007demo}
\bibitem{van2007demo}
\bibinfo{author}{Jan \surnamestart van Eijck\surnameend}
  (\bibinfo{year}{2007}): \emph{\bibinfo{title}{DEMO -- A Demo of Epistemic
  Modelling}}.
\newblock \bibinfo{type}{Technical Report}, \bibinfo{institution}{Amsterdam
  University Press}.

\bibitemdeclare{book}{K14cn}
\bibitem{K14cn}
\bibinfo{author}{Kayhan \surnamestart Erciyes\surnameend}
  (\bibinfo{year}{2014}): \emph{\bibinfo{title}{Complex Networks: An
  Algorithmic Perspective}}, \bibinfo{edition}{1st} edition.
\newblock \bibinfo{publisher}{CRC Press, Inc.}, \bibinfo{address}{Boca Raton,
  FL, USA}, \doi{10.1201/b17409}.

\bibitemdeclare{book}{FHM+95rk}
\bibitem{FHM+95rk}
\bibinfo{author}{Ronald \surnamestart Fagin\surnameend},
  \bibinfo{author}{Joseph~Y \surnamestart Halpern\surnameend},
  \bibinfo{author}{Yoram \surnamestart Moses\surnameend} \&
  \bibinfo{author}{Moshe~Y \surnamestart Vardi\surnameend}
  (\bibinfo{year}{2003}): \emph{\bibinfo{title}{Reasoning about Knowledge}}.
\newblock \bibinfo{publisher}{The MIT press}, \bibinfo{address}{Cambridge, MA,
  USA}.

\bibitemdeclare{incollection}{GvdM04MCK}
\bibitem{GvdM04MCK}
\bibinfo{author}{Peter \surnamestart Gammie\surnameend} \& \bibinfo{author}{Ron
  \surnamestart van~der Meyden\surnameend} (\bibinfo{year}{2004}):
  \emph{\bibinfo{title}{MCK: Model Checking the Logic of Knowledge}}.
\newblock In: {\sl \bibinfo{booktitle}{CAV}}, {\sl \bibinfo{series}{LNCS}}
  \bibinfo{volume}{3114}, \bibinfo{publisher}{Springer}, pp.
  \bibinfo{pages}{479--483}, \doi{10.1007/978-3-540-27813-9_41}.

\bibitemdeclare{inproceedings}{HC13bsal}
\bibitem{HC13bsal}
\bibinfo{author}{Andrew~K. \surnamestart Hirsch\surnameend} \&
  \bibinfo{author}{Michael~R. \surnamestart Clarkson\surnameend}
  (\bibinfo{year}{2013}): \emph{\bibinfo{title}{Belief Semantics of
  Authorization Logic}}.
\newblock In: {\sl \bibinfo{booktitle}{CCS}}, \bibinfo{publisher}{ACM}, pp.
  \bibinfo{pages}{561--572}, \doi{10.1145/2508859.2516667}.

\bibitemdeclare{inproceedings}{YKBA11afps+}
\bibitem{YKBA11afps+}
\bibinfo{author}{Yabing \surnamestart Liu\surnameend},
  \bibinfo{author}{Krishna~P. \surnamestart Gummadi\surnameend},
  \bibinfo{author}{Balachander \surnamestart Krishnamurthy\surnameend} \&
  \bibinfo{author}{Alan \surnamestart Mislove\surnameend}
  (\bibinfo{year}{2011}): \emph{\bibinfo{title}{Analyzing Facebook Privacy
  Settings: User Expectations vs. Reality}}.
\newblock In: {\sl \bibinfo{booktitle}{ACM SIGCOMM}}, \bibinfo{series}{IMC
  '11}, \bibinfo{publisher}{ACM}, pp. \bibinfo{pages}{61--70},
  \doi{10.1145/2068816.2068823}.

\bibitemdeclare{article}{lomuscio2015mcmas}
\bibitem{lomuscio2015mcmas}
\bibinfo{author}{Alessio \surnamestart Lomuscio\surnameend},
  \bibinfo{author}{Hongyang \surnamestart Qu\surnameend} \&
  \bibinfo{author}{Franco \surnamestart Raimondi\surnameend}
  (\bibinfo{year}{2017}): \emph{\bibinfo{title}{{MCMAS:} an open-source model
  checker for the verification of multi-agent systems}}.
\newblock {\sl \bibinfo{journal}{{STTT}}}
  \bibinfo{volume}{19}(\bibinfo{number}{1}), pp. \bibinfo{pages}{9--30},
  \doi{10.1007/s10009-015-0378-x}.

\bibitemdeclare{article}{PardoLic}
\bibitem{PardoLic}
\bibinfo{author}{Ra\'ul \surnamestart Pardo\surnameend},
  \bibinfo{author}{Musard \surnamestart Balliu\surnameend} \&
  \bibinfo{author}{Gerardo \surnamestart Schneider\surnameend}
  (\bibinfo{year}{2017}): \emph{\bibinfo{title}{Formalising privacy policies in
  social networks}}.
\newblock {\sl \bibinfo{journal}{Journal of Logical and Algebraic Methods in
  Programming}} \bibinfo{volume}{90}, pp. \bibinfo{pages}{125--157},
  \doi{10.1016/j.jlamp.2017.02.008}.

\bibitemdeclare{inproceedings}{PS14fpp}
\bibitem{PS14fpp}
\bibinfo{author}{Ra\'ul \surnamestart Pardo\surnameend} \&
  \bibinfo{author}{Gerardo \surnamestart Schneider\surnameend}
  (\bibinfo{year}{2014}): \emph{\bibinfo{title}{A Formal Privacy Policy
  Framework for Social Networks}}.
\newblock In: {\sl \bibinfo{booktitle}{SEFM'14}}, {\sl \bibinfo{series}{LNCS}}
  \bibinfo{volume}{8702}, \bibinfo{publisher}{Springer}, pp.
  \bibinfo{pages}{378--392}, \doi{10.1007/978-3-319-10431-7_30}.

\bibitemdeclare{techreport}{appendix}
\bibitem{appendix}
\bibinfo{author}{Ra\'ul \surnamestart Pardo\surnameend} \&
  \bibinfo{author}{Gerardo \surnamestart Schneider\surnameend}
  (\bibinfo{year}{2017}): \emph{\bibinfo{title}{Model Checking Social Network
  Models (Extended Version)}}.
\newblock \bibinfo{type}{Technical Report}, \bibinfo{institution}{Chalmers
  University of Technology}.
\newblock
  \urlprefix\url{http://www.cse.chalmers.se/~pardo/papers/model-checking-SNM-full-version.pdf}.

\bibitemdeclare{incollection}{RT11lkfsn}
\bibitem{RT11lkfsn}
\bibinfo{author}{Ji~\surnamestart Ruan\surnameend} \& \bibinfo{author}{Michael
  \surnamestart Thielscher\surnameend} (\bibinfo{year}{2011}):
  \emph{\bibinfo{title}{A logic for knowledge flow in social networks}}.
\newblock In: {\sl \bibinfo{booktitle}{IBERAMIA}},
  \bibinfo{publisher}{Springer}, pp. \bibinfo{pages}{511--520},
  \doi{10.1007/978-3-642-25832-9_52}.

\bibitemdeclare{article}{SLG13fel}
\bibitem{SLG13fel}
\bibinfo{author}{Jeremy \surnamestart Seligman\surnameend},
  \bibinfo{author}{Fenrong \surnamestart Liu\surnameend} \&
  \bibinfo{author}{Patrick \surnamestart Girard\surnameend}
  (\bibinfo{year}{2013}): \emph{\bibinfo{title}{Facebook and the Epistemic
  Logic of Friendship}}.
\newblock {\sl \bibinfo{journal}{CoRR}} \bibinfo{volume}{abs/1310.6440}.

\end{thebibliography}


\end{document}